\theoremstyle{plain}
\newtheorem{lemma}{\textbf{Lemma}}
\newtheorem{theorem}{\textbf{Theorem}}
\newtheorem{proposition}{\textbf{Proposition}}
\newtheorem{corollary}{\textbf{Corollary}}
\theoremstyle{definition}
\newtheorem{definition}{\textbf{Definition}}
\theoremstyle{remark}
\newtheorem{example}{\textbf{Example}}
\newtheorem*{remark}{Remark}
\newcommand{\figwidth}{0.65\textwidth}
\newcommand{\figwidth}{0.48\textwidth}
\begin{document}
% paper title
\title{Free Ride on LDPC Coded Transmission}
\author{Suihua~Cai,
        Shancheng~Zhao,
        and~Xiao~Ma,~\IEEEmembership{Member,~IEEE}% <-this % stops a space
\thanks{This work was supported by the Science and Technology Planning Project of Guangdong Province (2018B010114001), the Basic Research Project of Guangdong Provincial NSF~(No.~2016A030308008), and the NSF of China~(No.~61871201 and No.~61771499).}% <-this % stops a space \emph{(Corresponding author: Xiao Ma.)}
\thanks{S. Cai is with the School of Electronics and Information Technology, Sun Yat-sen University, Guangzhou 510006, China (e-mail: caish5@mail2.sysu.edu.cn).}% <-this % stops a space
\thanks{S. Zhao is with the College of Information Science and Technology, Jinan University, Guangzhou 510632, China, and also with the Guangdong Key Laboratory of Data Security and Privacy Preserving and the Guangzhou Key Laboratory of Data Security and Privacy Preserving, Guangzhou 510632, China (e-mail: shanchengzhao@jnu.edu.cn).}% <-this % stops a space
\thanks{X. Ma is with the School of Data and Computer Science and the Guangdong Key Laboratory of Information Security Technology, Sun Yat-sen University, Guangzhou 510275, China (corresponding author, e-mail: maxiao@mail.sysu.edu.cn).}% <-this % stops a space
%\vpscaetitle
}
% make the title area
\maketitle

\begin{abstract}
In this paper, we formulate a new problem to cope with the transmission of extra bits over an existing coded transmission link~(referred to as coded payload link) without any cost of extra transmission energy or extra bandwidth.
This is possible since a gap to the channel capacity typically exists for a practical code.
A new concept, termed as {\em accessible capacity}, is introduced to specify the maximum rate at which the superposition transmission of extra bits is reliable and has a negligible effect on the performance of the coded payload link.
For a binary-input output-symmetric~(BIOS) memoryless channel, the accessible capacity can be characterized as the difference between the channel capacity and the mutual information rate of the coded payload link, which can be numerically evaluated for very short payload codes.
For a general payload code, we present a simple lower bound on the accessible capacity, given by the channel capacity minus the coding rate of the payload code.
We then focus on the scenarios where low-density parity-check~(LDPC) codes are implemented for the payload link.
We propose to transmit extra bits by random superposition for encoding, and exhaustive search~(with the aid of statistical learning) for decoding.
We further propose, by establishing an auxiliary channel~(called \emph{syndrome channel}) induced from ``zero-forcing'' over the binary field, to transmit extra bits with structured codes such as repetition codes and first-order Reed-Muller~(RM) codes.
Numerical results show that up to 60 extra bits can be reliably transmitted along with a rate-1/2 LDPC code of length 8064.
\end{abstract}
% Note that keywords are not normally used for peerreview papers.
\begin{IEEEkeywords}
Accessible capacity, free-ride codes, LDPC codes, superposition coding, superposition transmission, transmission of extra bits
\end{IEEEkeywords}
\IEEEpeerreviewmaketitle

\section{Introduction}
%In communication systems, error correction coding plays the most significant role for reliable and efficient information transmission.
Low-density parity-check~(LDPC) codes, discovered by Gallager in 1962~\cite{Gallager1962}, are shown in practice to achieve near-capacity performance.
As a result, LDPC codes have been widely adopted in modern communication systems, such as the satellite transmission~\cite{DVBS2}, Wi-Fi~\cite{IEEE80211} and enhanced mobile broadband~(eMBB) applications in 5G~\cite{TS38212}.
In many scenarios, in addition to payload data, a few extra bits are required to be transmitted.
For example, in Digital Terrestrial Television~(DTV) systems, the transmitter identification~(TxID) is used to detect, diagnose and classify the operating status of radio transmitters.
Generally, it is required to acquire the TxID without decoding of the DTV signals, especially for weak interference sources~\cite{Wang2004}. In systems with hybrid automatic retransmission request~(HARQ), the ACK/NACK has to be transmitted to indicate correct reception or erroneous reception.
Similarly, the control signals can be viewed as extra data to be transmitted, as they typically have high reliability requirements.

One simple approach is to encode the extra bits and the payload data separately.
This separation is either because the receiver is only interested in the extra bits or because the reliability requirement of the extra bits is higher.
The main drawback of such an approach is that it will inevitably lead to an extra consumption of transmission power and bandwidth.
An interest question arises.
Is it possible to transmit extra bits with neither bandwidth expansion nor transmission energy increase?
The answer is theoretically positive since a gap always exists between the coding rate of the practical payload codes and the channel capacity.
%However, this separated transmission paradigm will lead to an extra consumption of transmission power and bandwidth.

%Hence, it is of interest to design extra transmission schemes to take a free ride on an existing payload transmission link.
%Theocratically, it is possible as there is a gap between the coding rate of the payload codes and the channel capacity.
For DTV system, digital watermarking techniques were used to embed TxID signals into the 8-symbol vestigial sideband~(8-VSB) signals~\cite{Wang2004}. This approach does not require extra bandwidth but slightly increases the transmission power.
In~\cite{Wang2005}, the authors proposed an additional data transmission scheme using TxID sequences.
In~\cite{Larsson2012}, a transmission scheme was presented to piggyback an additional lonely bit on coded payload data by using different interleavers.
In~\cite{Yan2011,Hong2015}, the authors proposed to use different constellations to transmit extra bits.
In~\cite{Xia2014}, blind identification of LDPC codes was proposed to avoid parameter transmission in adaptive coding and modulation.
For grant-free random access network, the authors in~\cite{Senel2018} proposed to embed short messages in pilot transmission.
To the best of our knowledge, existing schemes either require extra physical resources such as bandwidth and transmission power or can embed only a very small amount of extra bits into the payload transmission.

Recently, we have proposed in~\cite{Cai2019} a new coding scheme to transmit extra bits along with LDPC coded data, whereby extra bits are randomly encoded and then superimposed on LDPC-coded payload data.
The decoding of extra bits is achieved by exhaustive search with the aid of learning the statistics of the syndromes.
The simulation results in~\cite{Cai2019} show that, with a hard decision rule, ten extra bits can be embedded reliably into a rate-1/2 LDPC code with length $8064$.
This paper is more than an extension of~\cite{Cai2019}, having the following contributions.
%The main contributions of this work are given below.
\begin{enumerate}
\item We derive the accessible capacity of the superposition transmission of extra bits as well as a simple lower bound on the accessible capacity.
\item For random constructions, we propose an analytic expression to estimate the performance of the hard-decision decoding~(HDD) and further develop a soft-decision decoding~(SDD) to improve the performance.
%We propose an exhaustive search based soft-decision decoder to improve the performance.
\item We derive an auxiliary channel model~(termed as \emph{syndrome channel}) from a linear transformation defined at the receiver with the parity-check matrix of the LDPC code.
    We then propose a systematic approach to transmission of extra bits by using structured codes.
%We establish a reduced channel model for extra transmission, based on which we propose a low-complexity extra transmission scheme using structured codes.
\end{enumerate}
%Firstly, we derive the accessible capacity of the superposition-based extra transmission schemes. We also present a simple lower bound of the accessible capacity.
%Secondly, we present a brute-force-based soft-decision decoder with enhanced error performance.
%Thirdly, we establish a simplified channel model for extra transmission, based on which we propose a low-complexity extra transmission scheme using structured codes.

To verify the basic ideas, we take repetition codes and first-order Reed-Muller~(RM) codes as construction examples.
%For illustration, repetition codes and first-order Reed-Muller~(RM) codes are chosen.
Simulation results show that as many as 60 extra bits can be packed into a rate-1/2 LDPC code with length 8064, while having a negligible effect on the reliability of the payload data.
The proposed extra data transmission schemes can be used in practical systems to efficiently transmit extra bits.

The rest of this paper is organized as follows.
The problem statements and the system model are introduced in Section II.
In the presented model, the accessible capacity for the extra transmission is derived, where also presented is a constructive proof for the lower bound on the accessible capacity.
The coding schemes for extra transmission based on random superposition are presented in Section III.
In Section IV, new coding schemes using structured codes are proposed for extra transmission, where both encoding and decoding are designed for the derived syndrome channel.
%an auxiliary channel called {\em syndrome channel} induced from extra transmission is presented, based on which new extra transmission coding schemes using structured codes are proposed.
Finally, Section V summarizes this paper.
\section{Accessible Capacity}
\subsection{Payload Transmission}
We focus on a coded payload link consisting of a binary linear block code and a binary-input output-symmetric~(BIOS) memoryless channel~\cite{Viterbi2013}.
%We first illustrate the transmission of the payload data, which is assume to be encoded by a binary linear block code and transmitted over a binary-input output-symmetric~(BIOS) memoryless channel~\cite{Viterbi2013}.
The detailed formulation is presented as follows.
\begin{enumerate}
    \item[1)] \emph{Encoding}\\
        Let $\mathbb{F}_2$ be the binary field and $\mathscr{C}_0[n,k]$ be a binary linear block code with dimension $k$ and length $n$, referred to as the \emph{payload code}.
        %Assume that the payload data are encoded by a binary linear block code $\mathscr{C}[n,k]$ with dimension $k$ and length $n$, which is also referred to as the payload code.
        Let $\bm{u}=(\bm{u}^{(0)},\bm{u}^{(1)},\dots,\bm{u}^{(L-1)}) \in\mathbb{F}_2^{kL}$ be the payload data to be transmitted, where $\bm{u}^{(t)}\in\mathbb{F}_2^k$ are data blocks with size $k$.
        By performing separately $L$ times the encoding algorithm of $\mathscr{C}_0[n,k]$, the payload data are transformed to a sequence of codewords $\bm{c}=(\bm{c}^{(0)},\bm{c}^{(1)},\dots,\bm{c}^{(L-1)}) \in\mathbb{F}_2^{nL}$, where $\bm{c}^{(t)}\in \mathscr{C}_0[n,k]$ is the codeword corresponding to the data block $\bm{u}^{(t)}$.
    \item[2)] \emph{BIOS Memoryless Channel}\\
        A BIOS channel is characterized by an input set $\mathbb{F}_2$, an output set $\mathcal{Y}$~(discrete or continuous), and a conditional probability mass~(or density) function $P_{Y|X}(y|x),x\in\mathbb{F}_2,y\in\mathcal{Y}$, satisfying the symmetric condition that
        \begin{equation}
            P_{Y|X}(y|1)=P_{Y|X}(\pi(y)|0)
        \end{equation}
        for some mapping $\pi:\mathcal{Y}\rightarrow\mathcal{Y}$ with $\pi^2(y)=y$ for all $y\in\mathcal{Y}$.
        The transmitted sequence $\bm{c}$ is transmitted over a BIOS memoryless channel, resulting in a received sequence $\bm{y}\in \mathcal{Y}^{nL}$.
    \item[3)] \emph{Decoding}\\
        Upon receiving $\bm{y}$, the receiver attempts to estimate the transmitted payload bits $\bm{u}$.
        Since the channel is memoryless, this can be done separately by performing $L$ times the decoding algorithm of the payload code.
        For notational convenience, we denote by $\psi_0$ the decoder which accepts $\bm{y}$ as input and delivers as output the estimated payload sequence $\widehat{\bm{u}}=(\widehat{\bm{u}}^{(0)},\widehat{\bm{u}}^{(1)},\dots,\widehat{\bm{u}}^{(L-1)})$, where $\widehat{\bm{u}}^{(t)} \in \mathbb{F}_2^k$ for $t=0,1,\dots,L-1$.
        %Assume that a decoder $\psi$ is utilized and $\widehat{\bm{u}}=(\widehat{\bm{u}}^{(0)},\widehat{\bm{u}}^{(1)},\dots,\widehat{\bm{u}}^{(L-1)})=\psi(\bm{y})$ is the estimated payload bit sequence after decoding.
    \item[4)] \emph{Error Performance Metric}\\
        We use the word error rate~(WER) and/or bit error rate~(BER) to characterize the performance of the (de)coding scheme.
        We define random variables
        \begin{equation}
            E^{(t)}_i=\left\{ \begin{array}{ll}
                0, & \textrm{if}~\widehat{U}_i^{(t)}=U_i^{(t)}\\
                1, & \textrm{if}~\widehat{U}_i^{(t)}\neq U_i^{(t)}
            \end{array} \right.  ~~\textrm{for}~0\leqslant i\leqslant k-1~\textrm{and}~0\leqslant t \leqslant L-1 .
        \end{equation}
        Then the WER and BER are given by
        \begin{equation}\label{eqn:WERpl}
            \mathrm{WER}=\frac{1}{L}\sum_{t=0}^{L-1}\textrm{Pr}\{\widehat{\bm{U}}^{(t)} \neq \bm{U}^{(t)} \}=\frac{1}{L}\textbf{E}[\sum_{t=0}^{L-1}\max_{0\leqslant i\leqslant k-1}E_i^{(t)}]
        \end{equation}
        and
        \begin{equation}\label{eqn:BERpl}
            \mathrm{BER}=\frac{1}{kL}\sum_{t=0}^{L-1}\sum_{i=0}^{k-1}\textrm{Pr}\{\widehat{U}_i^{(t)} \neq {U}_i^{(t)} \}=\frac{1}{kL}\textbf{E}[\sum_{t=0}^{L-1}\sum_{i=0}^{k-1}E_i^{(t)}],
        \end{equation}
        where $\textbf{E}[\cdot]$ is the mathematical expectation.
        %If $\textrm{WER~(or~BER)} < \varepsilon$, we say that the code $\mathscr{C}_0$ satisfies the {\em error performance requirement at level $\varepsilon$}, where the level $\varepsilon > 0$ is imposed by the quality requirement of the payload transmission.
        %In this paper, we take BER as the performance metric for the payload transmission.
        Usually,  WER is taken as the performance metric for short block codes while BER is for long block codes.
\end{enumerate}
%\begin{definition}
%    Given $0<\varepsilon<1$, the code $\mathscr{C}_0$ is said to be  \emph{$\varepsilon$-satisfactory} if there exists a decoding function $\psi_0$ such that  $\textrm{WER~(or~BER)} < \varepsilon$.
%\end{definition}

\subsection{Extra Transmission}
Let $\bm{v}\in \mathbb{F}_2^{k_1}$ be $k_1$ extra bits to be transmitted.
We are seeking a way to transmit $\bm{v}$ along with $\bm{u}$ without any bandwidth expansion or transmission power increase.
One way is to construct a \emph{totally new} encoding function $\phi$ that maps $(\bm{u}, \bm{v})$ into $\mathbb{F}_2^{nL}$.
We use the term ``totally new'' to suggest that such an encoding function can be irrelevant to $\mathscr{C}_0$.
This, however, need change the encoder for the payload transmission and might be inconvenient in some applications.
In this paper, we focus on implementing the transmission of extra bits by superposition, as shown in Fig.~\ref{fig:model} and described as follows.
%However, such an approach can cause at least two issues.
%First, the encoder for payloads need to be changed.
%Second, the $\varepsilon$-satisfactoriness of the payload transmission is not guaranteed.
%To avoid these issues, we implement the extra transmission by superposition, as shown in Fig.\ref{fig:model} and described as follows.
%Besides the payload data, we assume that there are some extra bits to be transmitted, which are denoted as a binary sequence $\bm{v}\in \mathbb{F}_2^{k_1}$.
%Then the free ride of extra bits can be restated as the extra transmission satisfying the following requirements.
%\begin{enumerate}
%\item[\textbf{R1.}] The extra bits are transmitted within the $L$ payload data blocks transmission. That is, neither extra bandwidth nor extra transmission power is required for the extra transmission.
%\item[\textbf{R2.}] The design of extra transmission is based on the payload transmission. That is, the coding structure of the payload data is maintained.
%\item[\textbf{R3.}] The extra transmission has a negligible effect on the error performance of the payload transmission.
%\end{enumerate}
\begin{figure}
    \centering
    \includegraphics[width=\figwidth]{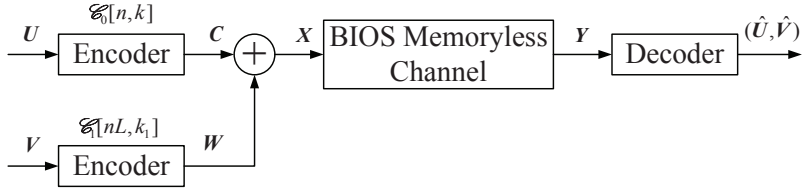}
    \caption{Diagram of system model.}
    \label{fig:model}
    %\vpscaefigure
\end{figure}
%To meet the requirements \textbf{R1} and \textbf{R2}, we propose a coding scheme that the extra bits are encoded and superimposed on the coded payloads, as shown in Fig.\ref{fig:model}.
%Now the problem arises:
%What is the maximum (reliable) rate of the extra transmission under the requirement \textbf{R3} is fulfilled?
%The detailed formulation of this problem is presented in the following.
\begin{enumerate}
    \item[1)] \emph{Encoding}\\
        The payload data sequence $\bm{u} \in\mathbb{F}_2^{kL}$ is encoded by the payload code $\mathscr{C}_0[n,k]$, while the extra bits $\bm{v}\in \mathbb{F}_2^{k_1}$ are encoded by a binary block code $\mathscr{C}_1[nL,k_1]$ with dimension $k_1$, resulting in a codeword $\bm{w}$ with length $nL$.
        The coded sequence $\bm{x}$ is given by
        \begin{equation}
            \phi(\bm{u},\bm{v})=\bm{x}=\bm{c}+\bm{w},
        \end{equation}
    where ``$+$'' is the component-wise addition over $\mathbb{F}_2$.
    The coding rates of the payload and extra bits are $R_0=k/n$ and $R_1=k_1/(nL)$, respectively.
    \item[2)] \emph{BIOS Memoryless Channel}\\
        The coded sequence $\bm{x}$ is transmitted over a BIOS memoryless channel, resulting in a received sequence $\bm{y}\in \mathcal{Y}^{nL}$.
    \item[3)] \emph{Decoding}\\
        Given $\bm{y}$, the receiver attempts to decode both the messages $\bm{u}$ and $\bm{v}$.
        This can be achieved naturally by implementing a successive cancellation decoder.
        %Let $\psi$ be the decoding function, which accepts $\bm{y}$ as input and delivers as output the estimated payload and extra messages $(\widehat{\bm{u}},\widehat{\bm{v}})=\psi(\bm{y})$.
        %Let $\psi$ be the decoding function, which accepts $\bm{y}$ as input and delivers as output $(\widehat{\bm{u}},\widehat{\bm{v}})=\psi(\bm{y})$, where $\widehat{\bm{u}}$ and $\widehat{\bm{v}}$ are the estimated payload and extra messages, respectively.
        %One applicable approach is to use a successive cancellation decoder, which is carried out in two steps.
        First, a decoder $\psi_1$ is used to decode $\bm{v}$ from $\bm{y}$, i.e.,
        \begin{equation}\label{eqn:SC0}
            \widehat{\bm{v}}=\psi_1(\bm{y}).
        \end{equation}
        Then, by removing the effect of extra coded data, $\bm{u}$ is recovered by the decoder $\psi_0$ of $\mathscr{C}_0$.
        That is,
        \begin{equation}
            \widehat{\bm{u}}=\psi_0(\bm{y}\boxplus\widehat{\bm{w}}),
        \end{equation}
        where $\widehat{\bm{w}}$ is the $\mathscr{C}_2[nL,k_1]$ codeword corresponding to $\widehat{\bm{v}}$ and the operation $\boxplus$, referred to as \emph{flipping operation}, is defined as
        \begin{equation}
            y_i\boxplus w_i=\left\{
            \begin{array}{ll}
                y_i,&w_i=0\\
                \pi(y_i),&w_i=1
            \end{array}
            \right. ~~\textrm{for}~0\leqslant i \leqslant nL-1.
        \end{equation}
        For integrity, we denote the decoding function by
        \begin{equation}\label{eqn:SC}
        \psi(\bm{y})=(\psi_0(\bm{y}\boxplus\widehat{\bm{w}}),\psi_1(\bm{y})).
        \end{equation}
    \item[4)] \emph{Error Performance Metric}\\
        For technical reasons concerning the proofs below, we take WER $\lambda_1^{(L)}=\mathrm{Pr}\{\widehat{\bm{V}}\neq \bm{V}\}$ of extra bits as performance metric for the superposition transmission.
        %Here, we denote the BER of the payload data as $\lambda_0$, as defined in~(\ref{eqn:BERpl}).
%        For the decoding of extra bits, we use the WER $\lambda_1^{(L)}=\mathrm{Pr}\{\widehat{\bm{W}}\neq \bm{W}\}$ to measure the performance.
%        The BER of the payload data is denoted as $\lambda_0$, as defined in~(\ref{eqn:BERpl}).
%        For the decoding of extra bits, we use the WER $\lambda_1^{(L)}=\mathrm{Pr}\{\widehat{\bm{W}}\neq \bm{W}\}$ to measure the performance.
\end{enumerate}

We refer to the code $\mathscr{C}_1$ for the extra transmission as the \emph{free-ride code}.
By ``free-ride'', we mean that the superposition transmission of extra bits requires neither extra bandwidth nor extra transmission power.
We will further prove that the extra transmission with properly designed free-ride codes can have a negligible effect on the error performance of the payload transmission.
Certainly, nothing is free since we need extra computational loads for encoding/decoding the extra bits.
%We refer to the code $\mathscr{C}_1$ for the extra bits as the \emph{free-ride code}.
%By free-ride, we mean that the superposition transmission of extra bits requires neither extra bandwidth nor extra transmission power.
%We will further prove that free-ride codes can have a negligible effect on the error performance of the payload transmission.
%%It is proved below that the extra transmission can have a negligible effect on the error performance of the payload transmission.
%Obviously, extra computational loads are required for encoding/decoding of the extra bits.
%%However, extra computational effort is required for decoding the extra bits.
%Actually， free-ride codes provide a way to trade off computational complexity and the spectral efficiency
%In fact, the proposed extra transmission schemes can be viewed as a tradeoff between the computational complexity and the spectral efficiency.

\begin{definition}
    A rate $R_1$ is \emph{accessible} for the extra transmission, if there exists a sequence of coding/decoding functions  $(\mathscr{C}_1,\psi_1)$ with coding rates no less than $R_1$, such that $\lim_{L\rightarrow\infty}\lambda_1^{(L)}=0$.
\end{definition}
%\begin{definition}
%    Given $0<\varepsilon<1$ and a $\varepsilon$-satisfactory code $\mathscr{C}_0$, a rate $R_1$ is \emph{accessible} for the extra data transmission, if $R_1$ is achievable and there exists a decoding function  $\psi$, such that the WER of the payload data $\lambda_0$ satisfies $\lambda_0< \varepsilon$.
%\end{definition}
\begin{definition}
    The \emph{accessible capacity} for the extra transmission is defined as the supremum of all accessible rates, i.e.,
    \begin{equation}
        C_a(\mathscr{C}_0)=\sup\{R_1:R_1~\textrm{is~accessable}\}.
    \end{equation}
\end{definition}
The notation $C_a(\mathscr{C}_0)$ indicates that the accessible capacity depends on the payload code $\mathscr{C}_0$.
When no confusion arises in the context, we use $C_a$ instead for ease.

\begin{remark}
The accessible capacity defined in this paper is similar to but different from that defined in~\cite{Huang2014}.
In~\cite{Huang2014}, the Gaussian interference channel is considered, where the superposition of the signals occurs at the receiver and is over the real field.
%In~\cite{Huang2014}, the channel is shared by two pairs of users, where the superposition of the two signals are over the field of real number.
As a result, the transmission of the secondary user in~\cite{Huang2014} requires extra transmission power.
In contrast, the transmission of extra bits is considered in this paper and the superposition is tailored over the binary field, which requires no extra transmission power.
\end{remark}

%\begin{formulation}
%    The problem is formulated as,
%    $$\textrm{given}~\mathscr{C}_0~\textrm{and}~\varepsilon,~\textrm{find}~C_a.$$
%\end{formulation}
We will show by examples that $C_a$ can be calculated numerically for short block codes $\mathscr{C}_0$ and prove that the extra transmission at an accessible rate can have a negligible effect on the payload transmission in the following sense.
\begin{proposition}
    Let $\lambda_0$ and $\widetilde{\lambda}_0$ be the WERs with respect to the payload transmission for the scenarios with and without extra transmission, respectively.
    Let $\lambda_1^{(L)}$ be the WER with respect to the extra transmission.
    Then $\lambda_0$ can be bounded by
    \begin{equation}
        \lambda_0 \leqslant \widetilde{\lambda}_0+\lambda_1^{(L)},
    \end{equation}
    indicating that the effect of the extra transmission can be negligibly small when the extra transmission is reliable.
    %, i.e., $\lambda_0\rightarrow \widetilde{\lambda}_0$ as $\lambda_1^{(L)}\rightarrow 0$.
    That is, this upper bound implies that $\lambda_0\rightarrow \widetilde{\lambda}_0$ as $\lambda_1^{(L)}\rightarrow 0$ since $\lambda_0 \geqslant \widetilde{\lambda}_0$ holds obviously.
\end{proposition}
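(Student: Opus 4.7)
The plan is a routine law-of-total-probability argument pivoted on the event that the extra-bit decoder succeeds. For each block index $t$ I split
\begin{equation*}
\textrm{Pr}\{\widehat{\bm{U}}^{(t)}\neq \bm{U}^{(t)}\} = \textrm{Pr}\{\widehat{\bm{U}}^{(t)}\neq \bm{U}^{(t)},\widehat{\bm{V}}=\bm{V}\} + \textrm{Pr}\{\widehat{\bm{U}}^{(t)}\neq \bm{U}^{(t)},\widehat{\bm{V}}\neq \bm{V}\},
\end{equation*}
upper-bound the second summand by $\textrm{Pr}\{\widehat{\bm{V}}\neq \bm{V}\}=\lambda_1^{(L)}$, average over $t$, and arrive at $\lambda_0\leqslant A+\lambda_1^{(L)}$, where $A$ denotes the average of the first summands. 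It then remains to establish $A\leqslant\widetilde{\lambda}_0$.

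On the event $\{\widehat{\bm{V}}=\bm{V}\}$, the reconstructed extra codeword $\widehat{\bm{w}}$ coincides with $\bm{w}$, so the successive-cancellation stage feeds $\bm{Y}\boxplus\bm{W}$ into $\psi_0$. Dropping that event as an upper bound yields
\begin{equation*}
A \leqslant \frac{1}{L}\sum_{t=0}^{L-1}\textrm{Pr}\bigl\{[\psi_0(\bm{Y}\boxplus\bm{W})]^{(t)}\neq\bm{U}^{(t)}\bigr\}.
\end{equation*}
The crucial step is then the distributional identity that, conditioned on the payload codeword $\bm{c}$, the vector $\bm{Y}\boxplus\bm{W}$ has exactly the same joint law as the channel output one would observe by transmitting $\bm{c}$ alone through the same BIOS memoryless channel, without any superposition. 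Coordinatewise this follows from the BIOS symmetry $P_{Y|X}(y|1)=P_{Y|X}(\pi(y)|0)$ together with $\pi^2=\textrm{id}$: on coordinates with $w_i=0$ the flip is trivial, while on coordinates with $w_i=1$ the transmitted symbol is $c_i+1$ and the involution $\pi$ exactly reverses the symmetry, restoring the marginal $P_{Y|X}(\cdot\,|c_i)$. Memorylessness lifts this componentwise statement to the length-$nL$ block.

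Granting the identity and using that the same payload decoder $\psi_0$ is employed in both the superposition and the payload-only schemes, the right-hand side above reduces to the defining expression for $\widetilde{\lambda}_0$ in~(\ref{eqn:WERpl}), which gives $A\leqslant\widetilde{\lambda}_0$ and hence $\lambda_0\leqslant\widetilde{\lambda}_0+\lambda_1^{(L)}$. I do not anticipate any serious obstacle beyond writing out the BIOS-symmetry cancellation carefully; everything else is standard union bounding and bookkeeping.
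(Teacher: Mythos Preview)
Your proposal is correct and follows essentially the same route as the paper: split each block-error probability on the event $\{\widehat{\bm{V}}=\bm{V}\}$, bound the complementary part by $\lambda_1^{(L)}$, replace $\widehat{\bm{W}}$ by $\bm{W}$ on the good event, and then drop that event to recover $\widetilde{\lambda}_0$. The paper simply asserts the identification $\frac{1}{L}\sum_t\mathrm{Pr}\{\psi_0^{(t)}(\bm{Y}\boxplus\bm{W})\neq\bm{U}^{(t)}\}=\widetilde{\lambda}_0$, whereas you spell out the BIOS-symmetry justification for it, which is a welcome addition rather than a deviation.
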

\begin{proof}
    %By a slight abuse of notation, we use the superscript $^{(t)}$ to indicate the decoding functions with respect to the $t$-th sub-block of the payload data.
    Consider the successive cancellation decoder as described in~(\ref{eqn:SC0})-(\ref{eqn:SC}).
    Let $\widehat{\bm{W}}$ be the estimated free-ride codeword corresponding to $\widehat{\bm{V}}=\psi_1(\bm{Y})$. Then we have
    %a payload codeword can be correctly recovered when the extra bits are correctly decoded and then the payload code is correctly decoded conditioning on the correct cancellation of the superposition.
    %Therefore, such an inequality holds
    %\begin{align}
%        \notag 1-\lambda_0 \geqslant& \textrm{Pr}\{\textrm{both}~\mathscr{C}_0~\textrm{and}~\mathscr{C}_1~\textrm{are~decoded~correctly}\}\\
%        \notag=&1-\textrm{Pr}\{\textrm{either}~\mathscr{C}_0~\textrm{or}~\mathscr{C}_1~\textrm{is~decoded~erroneously}\}\\
%        \geqslant& 1-\widetilde{\lambda}_0-\lambda_1^{(L)}.
%    \end{align}
    \begin{align}
        \notag \lambda_0 =&\frac{1}{L}\sum_{t=0}^{L-1}\mathrm{Pr}\{\psi_0^{(t)}(\bm{Y}\boxplus\widehat{\bm{W}})\neq \bm{U}^{(t)}\}\\
        \notag\leqslant& \frac{1}{L}\sum_{t=0}^{L-1}\mathrm{Pr}\{\psi_0^{(t)}(\bm{Y}\boxplus\widehat{\bm{W}})\neq \bm{U}^{(t)},\psi_1(\bm{Y})=\bm{V}\}+\mathrm{Pr}\{\psi_1(\bm{Y})\neq\bm{V}\}\\
        \notag\leqslant& \frac{1}{L}\sum_{t=0}^{L-1}\mathrm{Pr}\{\psi_0^{(t)}(\bm{Y}\boxplus\bm{W})\neq \bm{U}^{(t)}\}+\mathrm{Pr}\{\widehat{\bm{V}}\neq\bm{V}\}\\
        =&\widetilde{\lambda}_0+\lambda_1^{(L)},
    \end{align}
    where $\psi_0^{(t)}(\bm{Y}\boxplus\widehat{\bm{W}})$ is the $t$-th sub-block of $\psi_0(\bm{Y}\boxplus\widehat{\bm{W}})$, i.e., $\psi_0^{(t)}(\bm{Y}\boxplus\widehat{\bm{W}})=\widehat{\bm{U}}^{(t)}$.
    %Hence, we have $\lambda_0 \leqslant \widetilde{\lambda}_0+\lambda_1^{(L)}$.
\end{proof}
%For clarity, in this paper we use the tilde notation to refer to the case without extra data superposition.
%For ease of analysis, a uniform source is assumed for the payload data.
For ease of analysis, we assume that the payload data $\{U_i^{(t)}:0\leqslant i<k,0\leqslant t<L\}$ are independent and uniformly distributed~(i.u.d.) binary random variables.
Then the accessible capacity can be calculated from the maximum mutual information over the transmission link $\bm{w} \rightarrow \bm{y}$.
%Then the transmission link $\bm{w} \rightarrow \bm{y}$ can be viewed as a block-wise stationary memoryless  channel of block length $n$, where the channel transition probability is given by
%\begin{equation}\label{eqn:blockchannel}
%    P_{\bm{Y}^n|\bm{W}^n}(\bm{y}|\bm{w})=2^{-k}\sum_{\bm{c}\in \mathscr{C}_0}P_{\bm{Y}^n|\bm{X}^n}(\bm{y}|\bm{w}+\bm{c}).
%\end{equation}
%The accessible capacity can be calculated from the maximum mutual information over this channel.
\begin{theorem}\label{thm:1}
    Let $\mathscr{C}_0$ be the payload code with coding rate $R_0=k/n$ over the given BIOS memoryless channel with capacity $C_{\textrm{BIOS}}$.
    The accessible capacity $C_a$ for the extra transmission is given by
    \begin{equation}
        C_a=C_{\textrm{BIOS}}-\frac{1}{n}I(\bm{C}^n;\widetilde{\bm{Y}}^n),
    \end{equation}
    where $I(\bm{C}^n;\widetilde{\bm{Y}}^n)$ is the mutual information between the transmitted codeword $\bm{C}^n \in\mathscr{C}_0$ and the received sequence $\widetilde{\bm{Y}}^n$ over the given BIOS channel without extra transmission.
\end{theorem}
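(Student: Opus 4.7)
The plan is to recognize the link from the free-ride codeword $\bm{W}\in\mathbb{F}_2^{nL}$ to the received sequence $\bm{Y}\in\mathcal{Y}^{nL}$ as a block memoryless channel and to compute its Shannon capacity. Partition $\bm{W}$ into $L$ blocks of length $n$; since the payload data are i.u.d., the payload codewords $\bm{C}^{(0)},\ldots,\bm{C}^{(L-1)}$ are independent and each is uniform on $\mathscr{C}_0$, so the received blocks $\bm{Y}^{(t)}$ are conditionally independent given $\bm{W}$ with common transition law
\begin{equation*}
Q(\bm{y}\,|\,\bm{w})=\frac{1}{|\mathscr{C}_0|}\sum_{\bm{c}\in\mathscr{C}_0}\prod_{i=0}^{n-1}P_{Y|X}(y_i\,|\,c_i+w_i),\quad\bm{w}\in\mathbb{F}_2^n.
\end{equation*}
A free-ride code of length $nL$ and rate $R_1$ corresponds to a code for this super-symbol DMC, used $L$ times, with rate $nR_1$ bits per super-symbol. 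Hence the accessible capacity equals $\tfrac{1}{n}$ times the Shannon capacity of $Q$, and it suffices to identify the latter with $nC_{\textrm{BIOS}}-I(\bm{C}^n;\widetilde{\bm{Y}}^n)$.

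Next, the plan is to show the uniform distribution on $\mathbb{F}_2^n$ achieves this block capacity. Applying the BIOS symmetry componentwise yields the input-shift identity $Q(\bm{y}\,|\,\bm{w}+\bm{a})=Q(\pi_{\bm{a}}(\bm{y})\,|\,\bm{w})$ for every $\bm{a}\in\mathbb{F}_2^n$, where $\pi_{\bm{a}}$ is the involution that applies $\pi$ in the coordinates where $\bm{a}=1$ and is the identity elsewhere. Consequently, every translate of an optimal input distribution remains optimal, and concavity of $I(\bm{W}^n;\bm{Y}^n)$ in the input law, combined with averaging over all $2^n$ translates, forces the uniform distribution to be capacity-achieving.

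With uniform $\bm{W}^n$, the mutual information can be evaluated through two complementary observations: (i) since $\bm{C}^n$ is independent of $\bm{W}^n$ and $\bm{W}^n$ is uniform, $\bm{X}^n=\bm{C}^n+\bm{W}^n$ is uniform on $\mathbb{F}_2^n$, so memorylessness plus BIOS symmetry gives $I(\bm{X}^n;\bm{Y}^n)=nC_{\textrm{BIOS}}$ and $H(\bm{Y}^n|\bm{X}^n)=nH(Y|X=0)$; (ii) the same componentwise $\pi_{\bm{w}}$ argument gives $H(\bm{Y}^n|\bm{W}^n=\bm{w})=H(\widetilde{\bm{Y}}^n)$ for every $\bm{w}$. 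Subtracting $H(\bm{Y}^n|\bm{W}^n)=H(\widetilde{\bm{Y}}^n)$ from $H(\bm{Y}^n)=nC_{\textrm{BIOS}}+nH(Y|X=0)$ and using $H(\widetilde{\bm{Y}}^n)=nH(Y|X=0)+I(\bm{C}^n;\widetilde{\bm{Y}}^n)$ yields the capacity formula $nC_{\textrm{BIOS}}-I(\bm{C}^n;\widetilde{\bm{Y}}^n)$.

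The conclusion then follows from Shannon's channel coding theorem applied to $Q$ over $L$ uses: achievability of any $R_1<C_{\textrm{BIOS}}-\tfrac{1}{n}I(\bm{C}^n;\widetilde{\bm{Y}}^n)$ by random (linear) coding over the super-symbol alphabet, and the converse by Fano's inequality together with the data-processing chain $\bm{V}\to\bm{W}\to\bm{Y}\to\widehat{\bm{V}}$. The main obstacle I anticipate is the optimality-of-uniform-input step: the channel $Q$ is not a product channel because the structured code $\mathscr{C}_0$ couples the $n$ coordinates of each block, so one must verify carefully that the BIOS symmetry survives this coupling, which it does precisely because the payload codeword is added bitwise to $\bm{w}$ and the BIOS involution $\pi$ acts coordinatewise. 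Everything else reduces to bookkeeping with entropies and a standard DMC coding theorem.
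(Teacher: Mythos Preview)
Your proposal is correct and follows the same high-level strategy as the paper: treat $\bm{W}^n\to\bm{Y}^n$ as a block-memoryless channel, argue via BIOS symmetry that the uniform input is optimal, and identify the resulting mutual information. The main difference lies in the decomposition used to evaluate $I(\bm{W}^n;\bm{Y}^n)$. You first establish optimality of the uniform input by a group-averaging/concavity argument and then compute $H(\bm{Y}^n)-H(\bm{Y}^n|\bm{W}^n)$ directly, using $H(\bm{Y}^n|\bm{W}^n=\bm{w})=H(\widetilde{\bm{Y}}^n)$ coordinatewise. The paper instead applies the chain rule
\[
I(\bm{W}^n;\bm{Y}^n)=I(\bm{C}^n,\bm{W}^n;\bm{Y}^n)-I(\bm{C}^n;\bm{Y}^n\,|\,\bm{W}^n)=I(\bm{X}^n;\bm{Y}^n)-I(\bm{C}^n;\bm{Y}^n\,|\,\bm{W}^n),
\]
observes that BIOS symmetry makes the second term equal to $I(\bm{C}^n;\widetilde{\bm{Y}}^n)$ \emph{for every} $p(\bm{w}^n)$, and then maximizes only the first term. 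This chain-rule route is a little slicker because the ``constant-in-$p(\bm{w}^n)$'' observation simultaneously handles optimality of the uniform input and the identification with $I(\bm{C}^n;\widetilde{\bm{Y}}^n)$, whereas you carry out these two steps separately. On the other hand, your write-up is more explicit about why the accessible capacity equals the Shannon capacity of $Q$ (achievability by random coding over super-symbols and the Fano/data-processing converse), a point the paper leaves implicit.
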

\begin{proof}
    Referring to Fig.~\ref{fig:model}, we consider the transmission link from the coded extra data $\bm{w}$ to the received sequence $\bm{y}$, where the coded payloads are treated as channel noise.
    This transmission link can be viewed as a block-wise stationary memoryless  channel with block length $n$, where the input set is $\mathbb{F}_2^{n}$, the output set is $\mathcal{Y}^n$, and the transition probability mass~(density) function is given by
    \begin{equation}\label{eqn:blockchannel}
        P_{\bm{Y}^n|\bm{W}^n}(\bm{y}|\bm{w})=2^{-k}\sum_{\bm{c}\in \mathscr{C}_0}P_{\bm{Y}^n|\bm{X}^n}(\bm{y}|\bm{w}+\bm{c}).
    \end{equation}
    %We consider the $n$-bit channel, which is characterized by the input set $\mathbb{F}_2^n$ and the output set $\mathcal{Y}^n$, so the channel is memoryless due to the independence of the payload blocks.
    %To distinguish from the received sequence without extra data superposition, we denote the received sequence as $\widetilde{\bm{Y}}^n$.
    %For notational simplicity, we denote the length-$n$ random variable vectors $\bm{Y}^n$, $\bm{C}^n$ and $\bm{W}^n$ as the symbols $Y$, $C$ and $W$, respectively.
    Then the accessible capacity can be calculated as
    %\begin{align}
%        \notag C_a=&\frac{1}{n}\max_{p(w)}I(W;Y)\\
%        \notag =&\frac{1}{n}\max_{p(w)} [H(Y)-H(Y|W)]\\
%        \notag =&\frac{1}{n}\max_{p(w)} [H(Y)-H(Y|C,W)+H(Y|C,W)-H(Y|W)]\\
%        \notag =&\frac{1}{n}\max_{p(w)} [I(C,W;Y)-I(C;Y|W)]\\
%        \leqslant &\frac{1}{n}\max_{p(w)} [nC_{\textrm{BIOS}}-I(C;Y|W)].
%    \end{align}
    \begin{align}
        \notag C_a=&\frac{1}{n}\max_{p(\bm{w}^n)}I(\bm{W}^n;\bm{Y}^n)\\
        \notag =&\frac{1}{n}\max_{p(\bm{w}^n)} [H(\bm{Y}^n)-H(\bm{Y}^n|\bm{W}^n)]\\
        \notag =&\frac{1}{n}\max_{p(\bm{w}^n)} [H(\bm{Y}^n)-H(\bm{Y}^n|\bm{C}^n,\bm{W}^n)+H(\bm{Y}^n|\bm{C}^n,\bm{W}^n)-H(\bm{Y}^n|\bm{W}^n)]\\
        =&\frac{1}{n}\max_{p(\bm{w}^n)} [I(\bm{C}^n,\bm{W}^n;\bm{Y}^n)-I(\bm{C}^n;\bm{Y}^n|\bm{W}^n)]. %\\
        %\leqslant &\frac{1}{n}\max_{p(\bm{w}^n)} [nC_{\textrm{BIOS}}-I(\bm{C}^n;\bm{Y}^n|\bm{W}^n)].
    \end{align}

    It is not difficult to verify that, for BIOS, the conditional mutual information $I(\bm{C}^n;\bm{Y}^n|\bm{W}^n)$ does not depend on the distribution $p(\bm{w}^n)$ and is equal to  $I(\bm{C}^n;\widetilde{\bm{Y}}^n)$, where $\widetilde{\bm{Y}}^n$ denotes the received sequence without extra data superposition.
    %It can be easily seen that conditional on the coded extra data $\bm{W}^n$, the mutual information $I(\bm{C}^n;\bm{Y}^n|\bm{W}^n)$ is constant.
    %So we have
%    \begin{equation}
%        I(\bm{C}^n;\bm{Y}^n|\bm{W}^n)=I(\bm{C}^n;\widetilde{\bm{Y}}^n)
%    \end{equation}
    Also notice that
    \begin{equation}
        I(\bm{C}^n,\bm{W}^n;\bm{Y}^n)= I(\bm{C}^n,\bm{W}^n,\bm{X}^n;\bm{Y}^n)= I(\bm{X}^n;\bm{Y}^n),
    \end{equation}
    since $(\bm{C}^n,\bm{W}^n)\rightarrow\bm{X}^n\rightarrow\bm{Y}^n$ forms a Markov chain.
    So we have
    \begin{align}
        \notag C_a =&\frac{1}{n}\left(\max_{p(\bm{w}^n)}I(\bm{X}^n;\bm{Y}^n)-I(\bm{C}^n;\widetilde{\bm{Y}}^n)\right).\\
        =& C_{\textrm{BIOS}}-\frac{1}{n}I(\bm{C}^n;\widetilde{\bm{Y}}^n),
    \end{align}
    where the last equality holds because of an important fact that, when $\bm{W}^n$ is uniformly distributed over $\mathbb{F}_2^n$, the coded sequence $\bm{X}^n=\bm{C}^n+\bm{W}^n$ is also uniformly distributed for whatever distribution of $\bm{C}^n$.
%
%    We can further prove that the equality holds by noticing that, when $\bm{W}^n$ is uniformly distributed over $\mathbb{F}_2^n$, the coded sequence $\bm{X}^n=\bm{C}^n+\bm{W}^n$ is also uniformly distributed for whatever distribution of $\bm{C}^n$.
%    Hence,
%%    \begin{equation}
%%        I((\bm{C}^n,\bm{W}^n);\bm{Y}^n)=I((\bm{C}^n,\bm{W}^n,\bm{X}^n);\bm{Y}^n)= I(\bm{X}^n;\bm{Y}^n)+I((\bm{C}^n,\bm{W}^n);\bm{Y}^n|\bm{X}^n)= nC_{\textrm{BIOS}},
%%    \end{equation}
%    \begin{align}
%        \notag I((\bm{C}^n,\bm{W}^n);\bm{Y}^n)=&H(\bm{C}^n,\bm{W}^n)-H(\bm{C}^n,\bm{W}^n|\bm{Y}^n)\\
%        \notag=& H(\bm{C}^n,\bm{W}^n,\bm{C}^n+\bm{W}^n)-H(\bm{C}^n,\bm{W}^n,\bm{C}^n+\bm{W}^n|\bm{Y}^n)\\
%        \notag=& I((\bm{C}^n,\bm{W}^n,\bm{C}^n+\bm{W}^n);\bm{Y}^n)\\
%        \notag=& I((\bm{C}^n,\bm{W}^n,\bm{X}^n);\bm{Y}^n)\\
%        \notag=& I(\bm{X}^n;\bm{Y}^n)+I((\bm{C}^n,\bm{W}^n);\bm{Y}^n|\bm{X}^n)\\
%        =& nC_{\textrm{BIOS}},
%    \end{align}
%    where $I(\bm{C}^n,\bm{W}^n;\bm{Y}^n|\bm{X}^n)=0$, since $(\bm{C}^n,\bm{W}^n)$, $\bm{X}^n$, and $\bm{Y}^n$ form a Markov chain.
\end{proof}

\begin{figure}
    \centering
    \includegraphics[width=\figwidth]{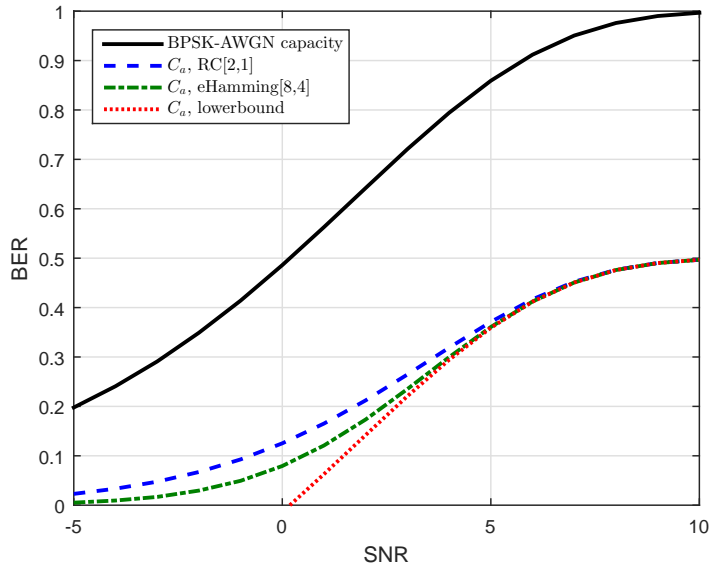}
    \caption{Accessible capacity of rate-1/2 codes over BPSK-AWGN channels.}
    \label{fig:ca}
    %\vpscaefigure
\end{figure}

\begin{example}
    To illustrate the accessible capacity, we consider simple payload codes over a discrete time additive white Gaussian noise~(AWGN) channel with binary phase shift keying~(BPSK) signalling.
    Two codes with rate $1/2$, repetition code $\mathscr{C}[2,1]$ and the extended Hamming code $\mathscr{C}[8,4]$, are used for the simulation, as shown in Fig.~\ref{fig:ca}.
    We see that the accessible capacity increases as the signal-to-noise ratio~(SNR) increases.
    We can also see that the accessible capacity with respect to the repetition code is higher than that with respect to the extended Hamming code, suggesting that the weaker the payload code is, the higher the accessible capacity is.
    %It has a higher accessible capacity of the repetition code than the extended Hamming code, which suggests that the weaker the payload code is, the higher accessible capacity can be approached.
    Furthermore, we see that, in high SNR region, the accessible capacities with respect to both codes are well predicted by the lower bound derived in the next subsection.
    %Furthermore, we see that in high SNR region, the curves of both two codes converge to $1/2$ and are well predicted by the lower bound as derived in the next subsection.
\end{example}

\subsection{A Lower Bound on Accessible Capacity}
%From \textbf{Theorem~\ref{thm:1}}, we see that the accessible capacity is related to the mutual information $I(\bm{C}^n;\widetilde{\bm{Y}}^n)$, which may be hard to acquired for long codes.
From \textbf{Theorem~\ref{thm:1}}, we see that the accessible capacity is related to the mutual information $I(\bm{C}^n;\widetilde{\bm{Y}}^n)$, which is usually difficult to evaluate for long codes.
For this reason, we provide here a simple lower bound on the accessible capacity.
%Hence, in this section, we provide a simple lower bound on the accessible capacity.
%In this subsection, we prove that the gap between the coding rate of the payload data and the channel capacity is achievable for the superimposed extra data,
%which can be view as a lower bound of the accessible capacity for a given payload data coding scheme.
%Let $C$ be the channel capacity of the BIOS channel, and $\mathscr{C}_0[n,k]$ be the payload data coding scheme.

\begin{theorem}\label{thm:2}
    %Let $\mathscr{C}_0$ be an $\varepsilon$-satisfactory code with coding rate $R_0=k/n$ over the given BIOS memoryless channel with capacity $C$.
    %Assume that there exists a coding/decoding function $(\mathscr{C},\psi)$ with coding rate $R_1=k/n$ for payload data transmission, such that the WER over the given BIOS memoryless channel $\lambda$  satisfies $\lambda< \varepsilon$
    %Let $\mathscr{C}_0$ be a block code with coding rate $R_0=k/n$.
%    Given that the payload data are coded by $\mathscr{C}_0$  and transmitted over a BIOS memoryless channel with capacity $C_{\textrm{BIOS}}$,
%    then the rate $C_{\textrm{BIOS}}-R_0$ is accessible for the extra transmission.
    Let $\mathscr{C}_0$ be a block code with coding rate $R_0=k/n$ over the given BIOS memoryless channel with capacity $C_{\textrm{BIOS}}$.
    %The rate $C_{\textrm{BIOS}}-R_0$ is accessible for the extra transmission.
    %In other words,
    The accessible capacity $C_a$  is lower bounded by
    \begin{equation}
        C_a\geqslant C_{\textrm{BIOS}}-R_0.
    \end{equation}
\end{theorem}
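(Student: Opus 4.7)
The plan is to invoke Theorem~\ref{thm:1} directly and then upper bound the mutual information term $\frac{1}{n}I(\bm{C}^n;\widetilde{\bm{Y}}^n)$ by the coding rate $R_0$. Since Theorem~\ref{thm:1} already expresses the accessible capacity as
\begin{equation}
    C_a = C_{\textrm{BIOS}} - \frac{1}{n}I(\bm{C}^n;\widetilde{\bm{Y}}^n),
\end{equation}
establishing $\frac{1}{n}I(\bm{C}^n;\widetilde{\bm{Y}}^n) \leqslant R_0$ will immediately yield the desired lower bound $C_a \geqslant C_{\textrm{BIOS}} - R_0$.

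First I would apply the standard inequality $I(\bm{C}^n;\widetilde{\bm{Y}}^n) \leqslant H(\bm{C}^n)$, which holds for any joint distribution because $I(X;Y)=H(X)-H(X|Y)$ and $H(X|Y)\geqslant 0$. Second, I would bound the entropy of the payload codeword: since $\bm{C}^n$ is supported on the code $\mathscr{C}_0[n,k]$ and $|\mathscr{C}_0|=2^k$, the maximum-entropy principle gives $H(\bm{C}^n)\leqslant \log_2|\mathscr{C}_0| = k$. Combining these two bounds and dividing by $n$ gives
\begin{equation}
    \frac{1}{n}I(\bm{C}^n;\widetilde{\bm{Y}}^n) \leqslant \frac{k}{n} = R_0,
\end{equation}
and substituting back into the expression from Theorem~\ref{thm:1} finishes the proof.

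There is essentially no obstacle here; the argument is a two-step chain of elementary information-theoretic inequalities, and no property of the BIOS channel beyond what Theorem~\ref{thm:1} already uses is needed. The only thing to watch is that the bound $H(\bm{C}^n)\leqslant k$ is tight when the payload data are i.u.d.\ (as assumed in the subsection preceding Theorem~\ref{thm:1}), which ensures the bound is not vacuous. In particular, the lower bound becomes tight in the high-SNR regime, where the genie-aided mutual information $\frac{1}{n}I(\bm{C}^n;\widetilde{\bm{Y}}^n)$ approaches its maximum $R_0$---consistent with the behavior observed in the preceding example.
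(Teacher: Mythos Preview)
Your proof is correct and is essentially identical to the paper's primary proof, which also invokes Theorem~\ref{thm:1} and bounds $I(\bm{C}^n;\widetilde{\bm{Y}}^n)\leqslant H(\bm{C}^n)=H(\bm{U}^k)\leqslant k$. The paper additionally supplies a second, constructive proof (via a random free-ride code ensemble and Markov's inequality) that does not rely on Theorem~\ref{thm:1} and gives operational insight into how to achieve rates up to $C_{\textrm{BIOS}}-R_0$, but your argument fully matches the short proof given first.
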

\begin{proof}
    Since $I(\bm{C}^n;\widetilde{\bm{Y}}^n)\leqslant H(\bm{C}^n)=H(\bm{U}^k)\leqslant k$, we have, from \textbf{Theorem~\ref{thm:1}},
    \begin{align}
        C_a = C_{\textrm{BIOS}}-\frac{1}{n}I(\bm{C}^n;\widetilde{\bm{Y}}^n)\geqslant C_{\textrm{BIOS}}-\frac{k}{n}= C_{\textrm{BIOS}}-R_0.
    \end{align}
\end{proof}
%It is obvious that $C_a(\mathscr{C}_1,\varepsilon)\leqslant f_{\textrm{BPSK-AWGN}}(\theta)-R_1$ since for any accessible $R_2$ for the extra data transmission, the total coding rate of the system is $R_1+R_2$ which should be no greater than the channel capacity $f_{\textrm{BPSK-AWGN}}(\theta)$.
%In the following, we prove that $f_{\textrm{BPSK-AWGN}}(\theta)-\frac{k}{n}$ is accessible for the additional data transmission.

The above proof is simple but implies that the lower bound can be tight for the payload codes with low WERs, in which case we can prove that $I(\bm{C}^n;\widetilde{\bm{Y}}^n) \approx k$ by invoking Fano's inequality.
Next, we present an alternative proof that is constructive and provides more insights on the construction of the free-ride codes.
%Next, we present a constructive proof for the lower bound, which provides insight into practical implementation of coding schemes for the extra transmission.
%Moreover, a constructive proof is presented, which gives some hints for the approach of the transmission of extra bits.
%The proof is similar to the one given in~\cite{Ma2016}.
%Without loss of generality, we assume that the all-zero codeword $\bm{0}^{nL}$ is transmitted for notational simplicity.
%To proceed, we present the following lemma.
%Before proceeding to the proof we state a simple lemma.
The basic technique for the proof, similar to that employed in~\cite{Ma2016}, is to apply Markov's inequality to the decoding error probability and then to invoke the weak law of large numbers~(WLLN) as stated in the following lemma.

\begin{lemma}\label{lem:1}
    Suppose that the zero sequence $\bm{0}^{nL}$ is transmitted over the BIOS channel with capacity $C_{\textrm{BIOS}}$.
    As $L$ goes to infinity, $\frac{1}{nL}\log\frac{P_{\bm{Y}^{nL}|\bm{X}^{nL}}(\bm{Y}^{nL}|\bm{0}^{nL})}{P_{\bm{Y}^{nL}}(\bm{Y}^{nL})}$ converges to $C_{\textrm{BIOS}}$ in probability, where $P_{\bm{Y}^{nL}}(\bm{Y}^{nL})$ corresponds to the input distribution $P_X(1) = 1/2$.
    That is, for any given $\delta,\varepsilon>0$, with $L$ sufficiently large, we have
    \begin{equation}
        \mathrm{Pr}\left\{\left|\frac{1}{nL}\log\frac{P_{\bm{Y}^{nL}|\bm{X}^{nL}}(\bm{Y}^{nL}|\bm{0}^{nL})}{P_{\bm{Y}^{nL}}(\bm{Y}^{nL})}-C_{\textrm{BIOS}}\right|<\delta\right\}>1-\varepsilon.
    \end{equation}
\end{lemma}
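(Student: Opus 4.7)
The plan is to recognize the log-ratio as a normalized sum of independent and identically distributed random variables, so that the statement reduces to the weak law of large numbers, with the only nontrivial step being the identification of the common mean with $C_{\text{BIOS}}$ via the symmetry assumption.

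First I would exploit the memoryless property to factorize both numerator and denominator. Since the channel is memoryless, $P_{\bm{Y}^{nL}\mid\bm{X}^{nL}}(\bm{y}\mid\bm{0}^{nL})=\prod_{j=0}^{nL-1}P_{Y\mid X}(y_j\mid 0)$, and because $P_{\bm{Y}^{nL}}$ is induced by the i.u.d.\ input $P_X(0)=P_X(1)=1/2$, we also have $P_{\bm{Y}^{nL}}(\bm{y})=\prod_{j=0}^{nL-1}P_Y(y_j)$ with $P_Y(y)=\tfrac{1}{2}[P_{Y\mid X}(y\mid 0)+P_{Y\mid X}(y\mid 1)]$. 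Taking the logarithm and dividing by $nL$ gives
\begin{equation}
\frac{1}{nL}\log\frac{P_{\bm{Y}^{nL}\mid\bm{X}^{nL}}(\bm{Y}^{nL}\mid\bm{0}^{nL})}{P_{\bm{Y}^{nL}}(\bm{Y}^{nL})}=\frac{1}{nL}\sum_{j=0}^{nL-1}Z_j,\quad Z_j\triangleq\log\frac{P_{Y\mid X}(Y_j\mid 0)}{P_Y(Y_j)},
\end{equation}
where, under the hypothesis that $\bm{0}^{nL}$ was transmitted, the $Y_j$'s are i.i.d.\ draws from $P_{Y\mid X}(\cdot\mid 0)$, hence the $Z_j$'s are i.i.d.

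Next I would invoke the weak law of large numbers to conclude that the empirical mean converges in probability to $\mathbb{E}[Z_0\mid X=0]$, and then identify this expectation with $C_{\text{BIOS}}$. By definition, $\mathbb{E}[Z_0\mid X=0]=D\bigl(P_{Y\mid X}(\cdot\mid 0)\,\|\,P_Y\bigr)$. By the BIOS symmetry $P_{Y\mid X}(y\mid 1)=P_{Y\mid X}(\pi(y)\mid 0)$ together with $\pi^2=\text{id}$ and the measure-preserving change of variables $y\mapsto\pi(y)$, one obtains $D\bigl(P_{Y\mid X}(\cdot\mid 1)\,\|\,P_Y\bigr)=D\bigl(P_{Y\mid X}(\cdot\mid 0)\,\|\,P_Y\bigr)$, so both conditional divergences coincide and equal
\begin{equation}
I(X;Y)=\tfrac{1}{2}D\bigl(P_{Y\mid X}(\cdot\mid 0)\,\|\,P_Y\bigr)+\tfrac{1}{2}D\bigl(P_{Y\mid X}(\cdot\mid 1)\,\|\,P_Y\bigr),
\end{equation}
evaluated at the uniform input. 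Since uniform input is the capacity-achieving distribution for a BIOS channel, this common value is precisely $C_{\text{BIOS}}$, giving $\mathbb{E}[Z_0\mid X=0]=C_{\text{BIOS}}$. Translating the WLLN into the $(\delta,\varepsilon)$ form stated in the lemma then yields the claim for all sufficiently large $L$.

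The main obstacle, and the only place where care is required, is the regularity needed to apply the WLLN: one needs the $Z_j$'s to be integrable (and, if one uses Chebyshev's inequality as the cleanest route, of finite variance). For the BIOS channels relevant to the paper (discrete output alphabets and BPSK-AWGN), this holds because $|\log\frac{P_{Y\mid X}(Y\mid 0)}{P_Y(Y)}|$ has finite moments under $P_{Y\mid X}(\cdot\mid 0)$; in the general case one can either assume this mild technical condition or appeal to Khinchin's WLLN, which only requires finite mean. Once integrability is in place, the rest of the argument is essentially bookkeeping.
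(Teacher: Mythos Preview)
Your proposal is correct and follows exactly the paper's approach: the paper simply records the identity $C_{\text{BIOS}}=\mathbf{E}\!\left[\log\frac{P_{Y|X}(Y|0)}{P_Y(Y)}\right]$ and observes that the lemma is then an application of the weak law of large numbers. You have spelled out the factorization, the symmetry argument identifying the mean with $C_{\text{BIOS}}$, and the integrability caveat in more detail than the paper does, but the underlying idea is identical.
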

\begin{proof}
    For a BIOS channel, the capacity $C_{\textrm{BIOS}}$ can be calculated by
%    \begin{equation}
%        f_{\textrm{BPSK-AWGN}}(\theta)=\frac{1}{2}(I_0+I_1),
%    \end{equation}
%    where
%    \begin{align}
%        I_0&=\sum_{y\in\mathbb{R}}P_{Y|X}(y|0)\log\frac{P_{Y|X}(y|0)}{P_Y(y)},\\
%        I_1&=\sum_{y\in\mathbb{R}}P_{Y|X}(y|1)\log\frac{P_{Y|X}(y|1)}{P_Y(y)},
%    \end{align}
%    and $P_Y(y)=(P_{Y|X}(y|0)+P_{Y|X}(y|1))/2$.
%    Thus, we have $I_0=I_1$ and
    \begin{equation}
        C_{\textrm{BIOS}}=\textbf{E}\left[\log\frac{P_{Y|X}(Y|0)}{P_{Y}(Y)}\right],
    \end{equation}
    showing that this lemma is the application of the WLLN.
    %Then the proof follows from the weak law of large numbers.
\end{proof}

\begin{proof}[Alternative Proof of Theorem \ref{thm:2}]
We need prove that any rate $R_1<C_{\textrm{BIOS}}-R_0$ is accessible.
%That is, there exists a series of coding/decoding schemes $(\mathscr{C}_1,\psi_1)$ with coding rates no less than $R_1$ that achieves arbitrarily low error rate for sufficiently large $L$.
For each $L\geqslant 1$, we construct a \emph{totally random} linear block code with dimension $k_1=\lceil nLR_1\rceil$ and length $nL$ as the free-ride code $\mathscr{C}_1$.
%We need prove that for any given $\delta>0$, there exists a series of coding/decoding schemes $(\mathscr{C}_1,\psi_1)$ of coding rate $ R_1 \geqslant C_{\textrm{BIOS}}-R_0-\delta $  that achieves arbitrarily low error rate for sufficiently large $L$.
%For any given $\delta>0$, we can choose integer $k_1$ and $L$, such that
%\begin{equation}\label{eqn:r1}
%     C_{\textrm{BIOS}}-R_0-\delta \leqslant R_1 \leqslant C_{\textrm{BIOS}}-R_0-\frac{2}{3}\delta,
%\end{equation}
%where $R_1\triangleq k_1/(nL)$.
%The free-ride code $\mathscr{C}_1$ can be constructed as a \emph{totally random} linear block code with dimension $k_1$ and length $nL$.
Let $\bm{G}_0$ and $\bm{G}_1$ be the generator matrices of $\mathscr{C}_0[n,k]$ and $\mathscr{C}_1[nL,k_1]$, respectively.
%For simplicity, we treat $u_1$ and $u_2$, as  binary sequences, where $u_1=(u_{1,0},u_{1,1},\dots,u_{1,kN})\in \mathbb{F}_2^{kN}$ and $u_2=(u_{2,0},u_{2,1},\dots,u_{2,k_2})\in \mathbb{F}_2^{k_2}$.
Then the payload data and the extra bits $(\bm{u},\bm{v})$ are encoded by a linear block code $\mathscr{C}[nL,kL+k_1]$ with the following generator matrix
%Hence, for the whole transmission system, in which $(\bm{u},\bm{v})$ is treated as input, the encoding function $\phi$ can be characterized by a linear block code $\mathscr{C}$ with dimension $kL+k_1$ and length $nL$.
%Thereby, the generator matrix of $\mathscr{C}$ is given by a  $(kL+k_1)\times nL$ matrix
%The generator matrix of $\mathscr{C}$ is with size $(kL+k_1)\times nL$, which is given by
\begin{equation}\label{eqn:semirandom}
    \bm{G}=\left[\begin{matrix}\widetilde{\bm{G}}_0\\ \bm{G}_1\end{matrix}\right],
\end{equation}
where $\widetilde{\bm{G}}_0$ is the generator matrix of the $L$-fold Cartesian product of $\mathscr{C}_0[n,k]$
\begin{equation}
    \widetilde{\bm{G}}_0=\left[\begin{array}{cccc}
               \bm{G}_0 &  &  &  \\
                & \bm{G}_0 &  &  \\
                &  & \ddots &  \\
                &  &  & \bm{G}_0
             \end{array}\right],
\end{equation}
and $\bm{G}_1$ is the generator matrix of the free-ride code with elements being generated independently according to the Bernoulli distribution with success probability $1/2$.

Without loss of generality, we assume that the all-zero codeword $\bm{0}^{nL}$ is transmitted.
We also assume that the output space $\mathcal{Y}$ is discrete for notational simplicity.
A maximum likelihood~(ML) decoding algorithm $\psi_1$ is assumed to decode the extra bits, which outputs $\bm{v}$ if, for some $\bm{u}$, the codeword $\phi(\bm{u},\bm{v})$ is the most likely one, i.e.,
\begin{equation}
    P_{\bm{Y}^{nL}|\bm{X}^{nL}}(\bm{y}^{nL}|\phi(\bm{u},\bm{v}))=\max_{\bm{x}^{nL}\in \mathscr{C}}P_{\bm{Y}^{nL}|\bm{X}^{nL}}(\bm{y}^{nL}|\bm{x}^{nL}).
\end{equation}

Define $\delta=(C_{\textrm{BIOS}}-R_0-R_1)/2$ and
\begin{equation}
\mathcal{A}^{(L)}=\left\{\bm{y}^{nL}: \left|\frac{1}{nL}\log\frac{P_{\bm{Y}^{nL}|\bm{X}^{nL}}(\bm{y}^{nL}|\bm{0}^{nL})}{P_{\bm{Y}^{nL}}(\bm{y}^{nL})}-C_{\textrm{BIOS}}\right|<\delta\right\}.
\end{equation}
From \textbf{Lemma~\ref{lem:1}}, for any $\varepsilon>0$, there exists an integer $L_0$ such that $\mathrm{Pr}\{\mathcal{A}^{(L)}\}>1-\varepsilon$ whenever $L>L_0$.
Then from the law of total expectation,
\begin{align}
\notag \lambda_1^{(L)}&=\sum_{\bm{y}^{nL}\in\mathcal{Y}^{nL}}\mathrm{Pr}\{\bm{y}^{nL}\}\cdot\lambda_1^{(L)}|_{\bm{y}^{nL}}\\
&<\varepsilon+\sum_{\bm{y}^{nL}\in \mathcal{A}^{(L)}}\mathrm{Pr}\{\bm{y}^{nL}\}\cdot\lambda_1^{(L)}|_{\bm{y}^{nL}}~~~\textrm{for}~L>L_0.
\label{eqn:lam1_1}
\end{align}
where $\lambda_1^{(L)}|_{\bm{y}^{nL}}$ is the WER of extra transmission conditional on the received sequence $\bm{y}^{nL}$.
%Here, for the simplicity of the presentation, $\mathcal{Y}^{nL}$ is assumed to be countable.

Given the received vector $\bm{y}^{nL}$, the ML codeword $\bm{X}^{nL}$ is a random vector over the code ensemble due to the randomness of $\bm{G}_1$.
In particular, the ML codeword $\bm{X}^{nL}=\bm{u}\widetilde{\bm{G}}_0 + \bm{v}\bm{G}_1$ is an i.u.d. binary random variable sequence once the decoding output $\bm{v}$ is erroneous, i.e., $\bm{v}\neq \bm{0}^{k_1}$.
%When the decoding output is $\bm{v}\neq \bm{0}^{k_1}$, i.e., at least one bit error in the decoding of extra bits,  $\bm{v}\bm{G}_1$ is an i.u.d. binary random variable sequence.
%Notice that for whatever $\bm{u}$, once $\bm{v}\neq \bm{0}^{k_1}$, the corresponding coded sequence $\bm{X}^{nL}=\phi(\bm{u},\bm{v})=\bm{u}\widetilde{\bm{G}} + \bm{v}\bm{G}_1$ is also an i.u.d. binary random variable sequence.
Hence, by using the same trick as in~\cite{Ma2016}, the conditional WER of extra transmission can be upper bounded by
\begin{align}
    \notag \lambda_1^{(L)}|_{\bm{y}^{nL}}=&\sum_{\bm{v}\neq \bm{0}^{k_1}}\mathrm{Pr}\{\bm{v}~\mathrm{is~the~decoding~output}|\bm{y}^{nL}\}\\
    \notag \leqslant &\sum_{(\bm{u},\bm{v}):\bm{v}\neq \bm{0}^{k_1}}\mathrm{Pr}\{P_{\bm{Y}^{nL}|\bm{X}^{nL}}(\bm{y}^{nL}|\bm{X}^{nL})\geqslant P_{\bm{Y}^{nL}|\bm{X}^{nL}}(\bm{y}^{nL}|\bm{0}^{nL})\}\\
    \notag\leqslant &2^{kL+k_1}\frac{\textbf{E}[P_{\bm{Y}^{nL}|\bm{X}^{nL}}(\bm{y}^{nL}|\bm{X}^{nL})]}{P_{\bm{Y}^{nL}|\bm{X}^{nL}}(\bm{y}^{nL}|\bm{0}^{nL})}~~~~\textrm{(by~Markov's~inequality)}\\
    \notag =&2^{kL+k_1} \frac{P_{\bm{Y}^{nL}}(\bm{y}^{nL})}{P_{\bm{Y}^{nL}|\bm{X}^{nL}}(\bm{y}^{nL}|\bm{0}^{nL})}\\
    \notag <&2^{kL+k_1-nL(C_{\textrm{BIOS}}-\delta)}~~~~\textrm{(by~\textbf{Lemma~1})}\\
    =& 2^{-nL(C_{\textrm{BIOS}}-R_0-\frac{k_1}{nL}-\delta)}.
     \label{eqn:lam1_2}
\end{align}
Since
\begin{equation}
    \lim_{L\rightarrow \infty}\left(C_{\textrm{BIOS}}-R_0-\frac{k_1}{nL}-\delta\right)= \delta >0,
\end{equation}
then from~(\ref{eqn:lam1_1}) and~(\ref{eqn:lam1_2}), we have $\lambda_1^{(L)} \rightarrow 0$ as $L\rightarrow \infty$.
%Therefore, the achievability of $f_{\textrm{BPSK-AWGN}}(\theta)-R_1$ is proved.
%Next, we prove that $C-R_0$ is accessible for the additional data transmission.
%Since $\mathscr{C}_0$ is $\varepsilon$-satisfactory, we assume that there exists a decoding function of the payload code $\psi_0$, such that the WER $\lambda$  satisfies $\lambda< \varepsilon$.
%We employ a successive cancellation decoder for the decoding of $\mathscr{C}$, which is shown in (\ref{eqn:SC}).
%Let $\lambda_0$ be the WER for the decoding of payload data transmission w.r.t. the decoder $\psi_0$.
%We have
%\begin{align}
%    \notag\lambda_0\leqslant& 1-(1-\lambda)(1-\lambda_1^{(L)})\\
%    \leqslant & \lambda+\lambda_1^{(L)}.
%\end{align}
%Since $ \lambda<\varepsilon$ and $\lambda_1^{(L)}\rightarrow 0~(L\rightarrow \infty)$, it is easy to see that the rate $C-R_0$ is accessible for the additional data transmission.
\end{proof}

We borrow the term \emph{semi-random} from~\cite{Lin2018} to characterize a block linear code that has a generator matrix consisting of both random rows and non-random rows~(see~(\ref{eqn:semirandom}) for an example).
%the generator matrix has the form as in~(\ref{eqn:semirandom}), i.e., rows of components are independently drawn from a Bernoulli distribution with success probability $1/2$.
Then we have, as a by-product, the following corollary.
\begin{corollary}
    Let $\mathscr{C}$ be a semi-random block code with coding rate less than the BIOS channel capacity.
    Asymptotically, the bits corresponding to the random rows can be decoded arbitrarily reliable.
\end{corollary}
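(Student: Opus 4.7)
The plan is to reduce the corollary directly to the alternative proof of Theorem~\ref{thm:2}, exploiting the observation that that proof nowhere uses the block-diagonal structure of $\widetilde{\bm{G}}_0$; it uses only that (i) the non-random rows define a fixed linear map and (ii) the random rows are i.u.d.\ Bernoulli$(1/2)$. So I would argue that the same estimate works verbatim for an arbitrary fixed non-random block.

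First, I would recast the semi-random code in the free-ride notation. Write
\[
\bm{G}=\begin{bmatrix}\bm{G}_s\\ \bm{G}_r\end{bmatrix},
\]
where $\bm{G}_s$ is a fixed $k_s\times n$ matrix (the non-random rows) and $\bm{G}_r$ is a $k_r\times n$ matrix whose entries are i.u.d.\ Bernoulli$(1/2)$. I would treat $\bm{G}_s$ as the payload encoder (with information block $\bm{u}_s$, rate $R_s=k_s/n$) and $\bm{G}_r$ as the free-ride encoder (with information $\bm{u}_r$ to be decoded reliably, rate $R_r=k_r/n$). Since the overall rate satisfies $R_s+R_r<C_{\text{BIOS}}$, I can fix $\delta=(C_{\text{BIOS}}-R_s-R_r)/2>0$, and the setup matches exactly that of the alternative proof modulo the shape of the non-random block.

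Second, I would replay the Markov-inequality/WLLN argument of the alternative proof of Theorem~\ref{thm:2}, now for a sequence of such codes with $n\to\infty$. Assume WLOG that $\bm{0}^n$ is transmitted and that $\psi_1$ is an ML decoder for $\bm{u}_r$. The key random-coding fact is that for any fixed $(\bm{u}_s',\bm{u}_r')$ with $\bm{u}_r'\neq\bm{0}$, the codeword $\bm{u}_s'\bm{G}_s+\bm{u}_r'\bm{G}_r$ is uniformly distributed on $\mathbb{F}_2^n$, because $\bm{u}_r'\bm{G}_r$ is already uniform and the additive shift $\bm{u}_s'\bm{G}_s$ is a constant independent of $\bm{G}_r$. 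A union bound over the $2^{k_s+k_r}$ candidates together with Markov's inequality then yields, on the typical-set event of Lemma~\ref{lem:1},
\[
\lambda_r^{(n)}\big|_{\bm{y}^n}\leqslant 2^{k_s+k_r}\,\frac{P_{\bm{Y}^n}(\bm{y}^n)}{P_{\bm{Y}^n|\bm{X}^n}(\bm{y}^n|\bm{0}^n)}\leqslant 2^{-n(C_{\text{BIOS}}-R_s-R_r-\delta)},
\]
which vanishes as $n\to\infty$ by the choice of $\delta$; combining with the vanishing-measure complement of the typical set gives $\lambda_r^{(n)}\to 0$.

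The main obstacle is conceptual rather than technical: one has to recognize that the alternative proof of Theorem~\ref{thm:2} already proves a stronger statement than it claims, namely that $\widetilde{\bm{G}}_0$ may be replaced by any fixed $\bm{G}_s$ of rate $R_s$. A secondary bookkeeping point is that the corollary is phrased for a single code, so I would formalize it as a statement about a sequence of semi-random codes with lengths $n\to\infty$ and a fixed rate split $R_s+R_r<C_{\text{BIOS}}$; with this convention, every step of the original proof carries over without change, and there is no new estimate to perform.
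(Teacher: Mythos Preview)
Your approach is correct and is precisely what the paper intends: the corollary is stated as a ``by-product'' of the alternative proof of Theorem~\ref{thm:2}, and the paper simply writes ``The proof is omitted here.'' Your observation that the argument uses only the i.u.d.\ structure of the random rows (so that $\bm{u}_s'\bm{G}_s+\bm{u}_r'\bm{G}_r$ is uniform whenever $\bm{u}_r'\neq\bm{0}$), and not the block-diagonal form of $\widetilde{\bm{G}}_0$, is exactly the point; your handling of the asymptotic regime as $n\to\infty$ with fixed rate split is also the natural reading of ``asymptotically.''
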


\begin{proof}
    The proof is omitted here.
\end{proof}

\section{Random Free-ride Codes}
This section serves to present the original idea of this work by reviewing our previous work.
As new achievements, we propose an analytical expression for estimating the HDD performance and develop the SDD for improving the performance.

Theoretically, the extra transmission at rates below the accessible capacity, requiring neither bandwidth expansion nor transmission power increase, can have a negligible effect on the payload transmission.
However, to recover one sub-block of the payload data, one must receive all the $L$ involved noisy sub-blocks and decode the extra bits first.
Apparently, this extra delay is not desirable in practice for large $L$.
%However, the reliable recovery of the payload data must be waited until the extra bits are reliably recovered from all $L$ involved noisy codewords of the payload code.
%This is not desirable in practice for large $L$.
In the remainder of this paper, we focus on $L=1$ and assume an LDPC code for the payload link.
We also assume that the coded sequence $\bm{x}$ is modulated by BPSK signalling and transmitted over an AWGN channel, resulting in a receiving vector
\begin{equation}
    \bm{y}=(-1)^{\bm{x}}+\bm{z},
\end{equation}
where $(-1)^{\bm{x}}$ is the transmitted signal sequence with the $i$-th component being $(-1)^{x_i}$ and $\bm{z}$ is a noise vector with each component being drawn independently from a normal distribution with mean zero and variance $\sigma^2$.
%In the above analysis, $L$ is required to be sufficient large to approach the given rate for the extra transmission.
%The most significant drawback is that the decoding latency is increasing as $L$ increases.
%For practical considerations, we set $L=1$, where the decoding of extra bits does not required extra \emph{structural decoding latency}~\cite{Hassan2012}.
%In the sequel, we use the LDPC codes as the payload codes transmitting over the BPSK-AWGN channels, i.e., the coded sequence $\bm{x}$ is modulated using BPSK with $0\mapsto +1$ and $1\mapsto -1$ and transmitted over an AWGN channel, resulting in a received sequence $\bm{y}$.
%Let $\bm{H}=(h_{ij})$ be the parity-check matrix of the LDPC code with size $m\times n$ for the payload transmission.
%For ease of analysis, we assume that the LDPC code is $(\gamma,\rho)$-regular, i.e., $\bm{H}$ has constant column weight $\gamma$ and constant row weight $\rho$, i.e. $\bm{H}$ defines a $(\gamma,\rho)$-regular LDPC code.
%\begin{enumerate}
%\item $\bm{H}$ is a full rank matrix.
%\item $\bm{H}$ has constant column weight $\gamma$ and constant row weight $\rho$, i.e. $\bm{H}$ defines a $(\gamma,\rho)$-regular LDPC code.
%%\item $\bm{H}$ satisfies the row-column~(RC)-constraint. That is, no two rows~(or two columns) can have more than one place where they both have identical non-zero components.
%\end{enumerate}
\subsection{Syndrome Statistics}
In our previous work~\cite{Cai2019}, we have presented a coding scheme based on random superposition to pack extra bits into LDPC coded data.
Similar to~\cite{Lin2019}, the decoding of the superposition transmission is aided by an off-line statistical learning.
The basic idea is to distinguish the case without  from that with superposition of extra bits by the statistical behavior of the syndrome.
%The idea is based on the observation that given the hard-decision sequence, the LDPC codewords and their superimposed versions can be distinguished by counting the number of unsatisfied parity-check equations.
The detailed analysis is shown as follows.

Let $\bm{H}=(h_{ij})_{m\times n}$ be the parity-check matrix of the LDPC code for the payload transmission, where $m = n-k$.
For ease of analysis, we assume that the LDPC code is $(\gamma,\rho)$-regular, i.e., $\bm{H}$ has constant column weight $\gamma$ and constant row weight $\rho$.
The payload data $\bm{u}$ are encoded by the LDPC code, while the extra bits $\bm{v}$ are encoded by a random linear block code $\mathscr{C}_1$ of length $n$.
Then the transmitted codeword is given by $\bm{x} = \bm{u}\bm{G}_0+\bm{v}\bm{G}_1$, where $\bm{G}_0$ is the generator matrix of the payload LDPC code and $\bm{G}_1$ is the generator matrix of the free-ride code with elements being generated independently according to the Bernoulli distribution with success probability $1/2$.
%The extra bits are encoded by a randomly generated linear block code $\mathscr{C}_1$ of length $n$.
%Let $\mathcal{S}=\{\bm{s}=\bm{v}\bm{G}_1:\bm{v}\in \mathbb{F}_2^{k_1}\}$ be the set of all possible superposition sequences.
%Given $\bm{y}$, the hard-decision sequence $\bm{\widehat{y}}$ is delivered as input to the decoder.
At the receiver, we make hard-decisions on the received signal $\bm{y}$, resulting in $\bm{\widehat{y}}$.
Then we define $N(\bm{s})\triangleq W_H((\bm{\widehat{y}}+\bm{s})\bm{H}^{\mathrm{T}})$ for each $\bm{s}\in\mathscr{C}_1$, where $\bm{H}^{\mathrm{T}}$ denotes the transpose of  $\bm{H}$ and $W_H(\cdot)$ denotes the Hamming weight function.
%We compute $\widetilde{\bm{\widehat{y}}} \triangleq \bm{\widehat{y}}+\bm{s}$ for each $\bm{s}\in\mathscr{C}_1$.
%Then we calculate the weight of the syndrome, which characterizes the number of unsatisfied parity-check equations, by
%\begin{equation}
%    N_{\bm{s}}\triangleq W_H(\widetilde{\bm{\widehat{y}}}\bm{H}^{\mathrm{T}}),
%\end{equation}
%where $\bm{H}^{\mathrm{T}}$ denotes the transpose of the matrix $\bm{H}$ and $W_H(\cdot)$ denotes the Hamming weight function.
In words, $N(\bm{s})$ is the number of \emph{unsatisfied} parity checks if $\widetilde{\bm{c}} = \bm{\widehat{y}}+\bm{s}$ is viewed as a noisy version of the transmitted LDPC codeword.
Due to the existence of the channel noise and the considered random-like free-ride code, $N(\bm{s})$ can be treated as a random variable, whose distribution heavily depends on whether  the assumed free-ride codeword $\bm{s}$ is the transmitted one or not.
To be precise, we distinguish the following two cases.
%For convenience,  we define $\widetilde{\bm{\widehat{y}}} = \bm{\widehat{y}}+\bm{s}$.
%Then, $N(\bm{s})$ characterizes the number of unsatisfied parity-check equations by considering $\widetilde{\bm{\widehat{y}}}$  as the noisy version of an LDPC codeword.
%It can be viewed as a random variable, whose distribution is closely related to whether $\bm{s}$ is correctly estimated.

\textbf{Case 1.} The free-ride codeword $\bm{s}$ is the transmitted one, i.e., $\bm{s}=\bm{w}=\bm{v}\bm{G}_1$.
In this case, we have
%If $\bm{s}$ is correct, i.e., $\bm{s}=\bm{w}=\bm{v}\bm{G}_1$, we have
\begin{equation}
    \widetilde{\bm{c}}=\bm{\widehat{y}}+\bm{s}=\bm{x}+\bm{\widehat{z}}+\bm{s}=\bm{u}\bm{G}_0+\bm{\widehat{z}},
\end{equation}
where $\bm{\widehat{z}}$ is the error pattern corresponding to the hard decision.
Hence $\widetilde{\bm{c}}$ can be viewed as the received sequence by passing the LDPC codeword $\bm{c}$ through a binary symmetric channel~(BSC) with
crossover probability
\begin{equation}
    p_b=Q\left(\frac{1}{\sigma}\right),
\end{equation}
where $Q(x)=\int_x^{+\infty}\frac{1}{\sqrt{2\pi}}\exp(-\frac{1}{2}t^2)\mathrm{d}t$.
The probability that a parity check is not satisfied can be calculated as
\begin{align}\label{eqn:calpc}
    \notag p=&\sum_{j~\textrm{is~odd}}\binom{\rho}{j}p_b^j(1-p_b)^{\rho-j}\\
    =&\frac{1}{2}[1-(1-2p_b)^{\rho}].
\end{align}
%Since $\bm{H}$ satisfies the RC-constraint, the events of the parity-check equations being unsatisfied are asymptotically pairwise independent.
Then the distribution of $N_{\bm{s}}$ can be approximated as\footnote{This approximation would be accurate only when $\gamma = 1$,  i.e., none of the parity-check equations were ``overlapped''. }
%By assuming the independence of these events, we can approximate the distribution of $N_{\bm{w}}$ by
\begin{equation}\label{eqn:Nw}
    \mathrm{Pr}\{N(\bm{s})=j\}=\binom{m}{j}p^j(1-p)^{m-j}.
\end{equation}

%which can be further estimated as a normal distribution with mean $\frac{n-k}{2}[1-(1-2p_b)^{\gamma}]$ and variance $\frac{n-k}{4}[1-(1-2p_b)^{2\gamma}]$ for a large $n$~(according to the central-limit theorem).
\textbf{Case 2.} The free-ride codeword $\bm{s}$ is not the transmitted one, i.e., $\bm{s}\neq\bm{w}$.
%In contrast, when $\bm{s}$ is erroneous ($\bm{s}\neq\bm{w}$),
In this case, we have
\begin{equation}
    \widetilde{\bm{c}}=\bm{\widehat{y}}+\bm{s}=\bm{u}\bm{G}_0+(\bm{s}+\bm{w})+\bm{\widehat{z}}=\bm{u}\bm{G}_0+\bm{\widetilde{v}}\bm{G}_1+\bm{\widehat{z}},
\end{equation}
where $\widetilde{\bm{v}}\neq \bm{0}^{k_1}$ is a non-zero sequence corresponding to the free-ride codeword $\bm{s}+\bm{w}$.
Hence $\widetilde{\bm{c}}$ can be viewed as the received sequence by passing the LDPC codeword $\bm{c}$ through a BSC with crossover probability 1/2, as the components of the generator matrix $\bm{G}_1$ are randomly and independently generated.
Consequently, each parity check is not satisfied with a probability of 1/2, leading to an approximate distribution of $N(\bm{s})$ as
\begin{equation}\label{eqn:Ns}
    \mathrm{Pr}\{N(\bm{s})=j\}=\frac{1}{2^{m}}\binom{m}{j}.
\end{equation}

The difference between the distributions (\ref{eqn:Nw}) and (\ref{eqn:Ns}) are obvious and can be significant especially when $\rho$ is small and $p_b =Q(1/\sigma)\ll 1/2$.
%Obviously, these two distributions are different especially when $p_b =Q(1/\sigma)\ll 1/2$.
In particular, the distribution of $N(\bm{s})$ with the transmitted $\bm{s}$ depends on the SNR, while the one with erroneous $\bm{s}$ is irrelevant to the SNR.

To verify these analysis, we provide the following example.
%Obviously, this distribution is independent of $p_b$ and hence is SNR-invariant.
%It can be seen that the statistical behaviors of $\widetilde{\bm{\widehat{y}}}$ are different between the correct candidate $\bm{w}$ and the erroneous candidates $\bm{s}$, especially when $p_b =Q(1/\sigma)\ll 1/2$.
%In the following, we give an example to show the difference of these two distributions.
\begin{example}\label{exp:histo}
    We consider a rate-1/2~(3,6)-regular LDPC code of length 8064 for the payload data transmission as an example.
    We set $k_1=5$.
    The histograms of $N(\bm{s})$ with correct candidate and erroneous candidates are show in Fig.~\ref{fig:histo}, from which we see that $N(\bm{s})$ is likely to be large if $\bm{s}$ is erroneous.
    We also see that the simulations match well with the approximate distributions.
    %We also plot the approximated distribution curves obtained by~(\ref{eqn:Nw}) and~(\ref{eqn:Ns}).
    %The simulation results match well with our analysis.
    As expected, the distribution of $N(\bm{w})$ shifts to left as SNR increases while the distribution of $N(\bm{s})$ with $\bm{s}\neq \bm{w}$ remains unchanged.
    %We see that the distribution of the correct candidate shifts to left as SNR increases while that of erroneous candidates remains unchanged.
    As a result, the difference between the two distributions becomes significant as SNR increases.
    %As a result, the gap of the distributions between the correct candidate and the erroneous candidates is increasing as SNR increase, which provides hints for the recovery of the superimposed extra bits.
\end{example}
\begin{figure}
    \centering
    \includegraphics[width=\figwidth]{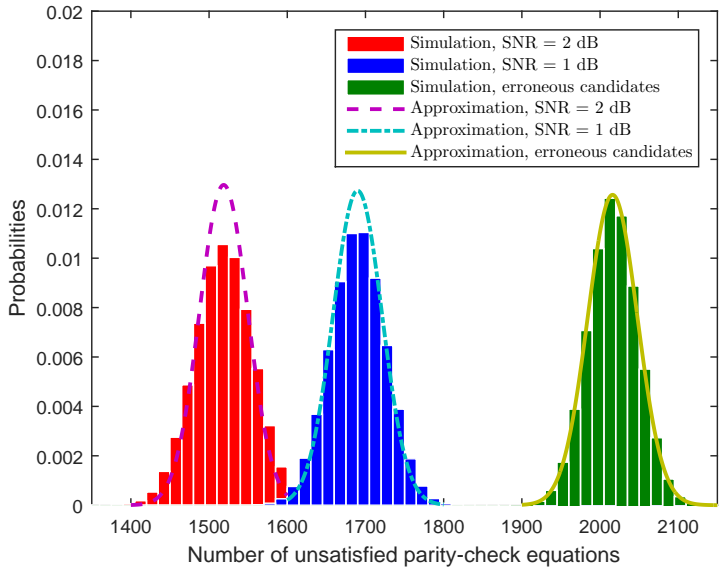}
    \caption{Histograms of the number of unsatisfied parity checks for the (3,6)-regular LDPC code $\mathscr{C}[8064, 4032]$.}
    \label{fig:histo}
    %\vpscaefigure
\end{figure}
\subsection{Hard-decision Decoding}
The above discussion on the syndrome statistics shows a way to transmit the extra bits along with the LDPC coded payload data.
Let $\bm{G}_1$ be a  $k_1\times n$ randomly generated matrix  with each element drawn independently from a Bernoulli distribution with success probability $1/2$.
The extra bits are first encoded by the linear block code $\mathscr{C}_1$ with the generator matrix $\bm{G}_1$, and then superimposed on the LDPC coded payload data.
In principle, these extra bits can be decoded by an exhaustive search algorithm~\cite{Cai2019}, which outputs $\widehat{\bm{v}}$ such that $N(\widehat{\bm{v}}\bm{G}_1) = \min_{\bm{s}\in\mathscr{C}_1} N(\bm{s})$.
%If the number of extra bits is small, they can be decoded by an exhaustive search algorithm.
%The decoder outputs $\widehat{\bm{v}}$, if $\widehat{\bm{w}}=\widehat{\bm{v}}\bm{G}_1$ has a minimum value of $N(\bm{s})$ among all $\bm{s}\in \mathscr{C}_1$.
For completeness, we reproduce the HDD algorithm of the scheme with randomly coded extra bits in Algorithm~\ref{alg:decran}.

\begin{algorithm2e}\label{alg:decran}
    \caption{Hard-decision decoding of the scheme with randomly coded extra bits}
    1.~Make hard-decision on the received sequence $\bm{y}$, resulting in $\bm{\widehat{y}}$.
         %Compute $\bm{\widehat{y}}\bm{H}^{\mathrm{T}}$.

    2.~\For {all $\bm{s}\in \mathscr{C}_1$} {Compute $N(\bm{s})=W_H((\bm{\widehat{y}}+\bm{s})\bm{H}^{\mathrm T})$.}%=W_H(\bm{\widehat{y}}\bm{H}^{\mathrm T}+\bm{s}\bm{H}^{\mathrm T})$,  where $\bm{s}\bm{H}^{\mathrm T}$ can be pre-computed and stored for use offline since they do not depend on the channel observations, providing a way to trade-off between the computational loads and the memory loads. }

    3.~Output $\widehat{\bm{v}}$ that corresponds to the lightest $N(\bm{s})$. That is, output $\widehat{\bm{v}}$ such that
        $$N(\widehat{\bm{v}}\bm{G}_1)=\min_{\bm{s}\in \mathscr{C}_1}N(\bm{s}).$$
        %$$\widehat{\bm{v}}=\underset{\bm{v}\in \mathbb{F}_2^{k_1}}{\arg\min}\{N_{\bm{v}\bm{G}_1}\}.$$

    4.~Remove the interference of $\widehat{\bm{w}}=\widehat{\bm{v}}\bm{G}_1$ on $\bm{y}$, obtaining
        $$\widetilde{y}_i=(-1)^{\widehat{w}_i}y_i~~\textrm{for}~i=0,1,\cdots,n-1.$$

    5.~Input $\widetilde{\bm{y}}$ into the LDPC decoder to obtain the estimated payload data $\widehat{\bm{u}}$.
\end{algorithm2e}
\begin{remark}
In Step~2 of Algorithm~\ref{alg:decran}, the computational loads can be reduced at the expense of the memory loads.
Indeed, the weight $N(\bm{s})$ can be computed by $N(\bm{s})=W_H(\bm{\widehat{y}}\bm{H}^{\mathrm T}+\bm{s}\bm{H}^{\mathrm T})$, where $\bm{s}\bm{H}^{\mathrm T}$ can be pre-computed off-line and stored for use since they do not depend on the channel observations.
%In Step~2 of Algorithm~\ref{alg:decran}, the weight $N(\bm{s})$ can be computed by $N(\bm{s})=W_H(\bm{\widehat{y}}\bm{H}^{\mathrm T}+\bm{s}\bm{H}^{\mathrm T})$, where $\bm{\widehat{y}}\bm{H}^{\mathrm T}$ is constant for different $\bm{s}$ and $\bm{s}\bm{H}^{\mathrm T}$ can be pre-computed and stored for use offline since they do not depend on the channel observations, providing a way to trade-off between the computational loads and the memory loads.
%In such a way, $N(\bm{s})$ can be computed by $N(\bm{s})=W_H((\bm{\widehat{y}}+\bm{s})\bm{H}^{\mathrm T})=W_H(\bm{\widehat{y}}\bm{H}^{\mathrm T}+\bm{s}\bm{H}^{\mathrm T})$
\end{remark}
\subsection{Performance and Complexity Analysis}
%In this subsection, we analyze the error performance and discuss how many extra bits can be transmitted.
By HDD, the WER of the extra bits is given by
\begin{align}
    \notag \lambda_1 = & \mathrm{Pr}\{\textrm{the~decoder~outputs}~\bm{s}\neq\bm{w}\} \\
    \leqslant & \mathrm{Pr}\{\min_{\bm{s}\neq\bm{w}} N(\bm{s})\leqslant N(\bm{w})\},
\end{align}
where the distributions of $N(\bm{w})$ and $N(\bm{s})$ can be approximated by~(\ref{eqn:Nw}) and (\ref{eqn:Ns}), respectively.

According to the central-limit theorem, for a large $n$, $N(\bm{w})$ can be further approximated by a normal distribution with mean $\mu_0=\frac{m}{2}[1-(1-2p_b)^{\rho}]$ and variance $\sigma_0^2=\frac{m}{4}[1-(1-2p_b)^{2\rho}]$, and $N(\bm{s})$ can be approximated by a normal distribution with mean $\mu_1=\frac{m}{2}$ and variance $\sigma_1^2=\frac{m}{4}$.
By these approximations, $\lambda_1$ can be calculated by
\begin{equation}\label{eqn:lam1app}
    \lambda_1 \approx %&\frac{1}{\sqrt{2\pi}\sigma_0}\int_{\mathbb{R}}\left[1-\left(Q(\frac{t-(n-k)/2}{\sqrt{(n-k)/4}})\right)^{2^{k_2}-1}\right]\exp\left(-\frac{(t-\mu_0)^2}{2\sigma_0}\right)\mathrm{d}t\\=&
    1-\frac{1}{\sqrt{2\pi}\sigma_0}\int_{\mathbb{R}}\left[Q\left(\frac{2t-m}{\sqrt{m}}\right)\right]^{2^{k_1}-1}\exp\left(-\frac{(t-\mu_0)^2}{2\sigma_0^2}\right)\mathrm{d}t
\end{equation}
%With a small $\lambda_1$, the effect of the superposition sequence can be correctly removed by the successive cancellation decoding in most cases.

Compared to the conventional payload data transmission, extra computational loads are required for decoding the extra transmission.
The number of extra operations is $\mathcal{O}(2^{k_1}n)$, which grows exponentially with the number $k_1$ of extra bits.
Therefore, due to the limitation of computational resources, only a very few extra bits can be transmitted.
%Therefore, $k_1$ is chosen typically small to lower the decoding complexity.

\subsection{Soft-decision Decoding}
Like payload codes, free-ride codes can also be decoded by employing soft decisions to improve the performance.
To this end, we turn to the log-likelihood ratio~(LLR), defined as
\begin{equation}\label{eqn:llrx}
    \Lambda_j(\bm{x})=\log\frac{P_{Y|X}(y_j|0)}{P_{Y|X}(y_j|1)}~~\textrm{for}~j=0,1,\dots,n-1.
\end{equation}
The basic idea is the same as that for HDD.
We check each $\bm{s}\in \mathscr{C}_1$ to find the right one.

%By assuming $\bm{s}$ as the transmitted free-ride codeword, the LLRs for the payload codeword are given by
For each $\bm{s}\in \mathscr{C}_1$, the LLRs with respect to $\bm{x}+\bm{s}$ can be obtained by flipping,
\begin{equation}\label{eqn:tildeL}
    \Lambda_j(\bm{x}+\bm{s})=(-1)^{s_j}\Lambda_j(\bm{x})~~\textrm{for}~j=0,1,\dots,n-1.
\end{equation}
Then the $i$-th parity check is satisfied with LLR given by
\begin{align}\label{eqn:llr}
    \notag \Lambda_i((\bm{x}+\bm{s})\bm{H}^{\mathrm{T}})=& \log\frac{\mathrm{Pr}\{\textrm{the}~i\textrm{-th~parity~check~is~satisfied}\}}{\mathrm{Pr}\{\textrm{the}~i\textrm{-th~parity~check~is~not~satisfied}\}}\\
    =&2\tanh^{-1}\left(\prod_{j:h_{ij}=1}\tanh\left(\frac{1}{2}\Lambda_j(\bm{c})\right)\right)~~\textrm{for}~i=0,1,\dots,m-1.
\end{align}
For each $\bm{s}\in \mathscr{C}_1$, we define
\begin{equation}\label{eqn:Lambda}
    \Lambda(\bm{s})=\sum_{i=0}^{m-1} \Lambda_i((\bm{x}+\bm{s})\bm{H}^{\mathrm{T}}).
\end{equation}
%To enhance the performance, a soft-decision decoding algorithm for the extra bits is presented~(see Algorithm~\ref{alg:decran_soft} for reference).
%Here we assume that $\bm{L}$, the sequence of the log-likelihood ratio~(LLR), given by
%\begin{equation}
%    L_i=\log\frac{P_{Y|X}(y_i|0)}{P_{Y|X}(y_i|1)}~~\textrm{for}~i=0,1,\dots,n-1,
%\end{equation}
%is sent as input to the soft-decision decoder.
%
%For each $\bm{s}\in \mathscr{C}_1$, the LLR of the coded payload code is estimated by $\bm{\widetilde{L}}$.
%That is
%\begin{equation}\label{eqn:tildeL}
%    \widetilde{L}_i=(-1)^{s_i}L_i~~\textrm{for}~i=0,1,\dots,n-1.
%\end{equation}
%Upon the estimated LLR of the coded payload code $\bm{\widetilde{L}}$, the LLR of $i$-th parity-check equation being satisfied $\widetilde{L}_{\mathrm{pc}}^{(i)}$ is given by
%\begin{equation}\label{eqn:llr}
%    \widetilde{L}_{\mathrm{pc}}^{(i)}=2\tanh^{-1}\left(\prod_{j:h_{ij}=1}\tanh(\frac{1}{2}\widetilde{L}_j)\right).
%\end{equation}
%Then we use $L(\bm{s})$, the sum of the LLRs of the parity-check equations, to characterize the likelihood that whether $\bm{s}$ is the transmitted superposition sequence.

Intuitively, we can imagine that  $\bm{s}=\bm{w}$ will lead to a larger $\Lambda(\bm{s})$ since, in this case, $\bm{x}+\bm{s}=\bm{c}\in\mathscr{C}_0$ and the parity-check equations should hold more likely especially for high SNRs.
This can also be verified by statistical learning which is omitted here.
To summarize, we present the SDD algorithm in Algorithm~\ref{alg:decran_soft}.
\begin{algorithm2e}\label{alg:decran_soft}
    \caption{Soft-decision decoding of the scheme with randomly coded extra bits}
    1.~Compute the LLR $\bm{\Lambda}(\bm{x})$ corresponding to the received sequence $\bm{y}$.

    2.~\For{ all $\bm{s}\in \mathscr{C}_1$}{
    %Compute $\bm{\Lambda}(\bm{x}+\bm{s})$ according to~(\ref{eqn:tildeL}).
    Compute $\Lambda(\bm{s})$ according to~(\ref{eqn:tildeL})-(\ref{eqn:Lambda}).}

    3.~Output $\widehat{\bm{v}}$ that corresponds to the largest $\Lambda(\bm{s})$. That is, output $\widehat{\bm{v}}$ such that
        $$\Lambda(\widehat{\bm{v}}\bm{G}_1)=\max_{\bm{s}\in \mathscr{C}_1}\Lambda(\bm{s}).$$
        %$$\widehat{\bm{v}}=\underset{\bm{v}\in \mathbb{F}_2^{k_1}}{\arg\max}\{L_{\bm{v}\bm{G}_1}\}.$$

    4.~Remove the interference of $\widehat{\bm{w}}=\widehat{\bm{v}}\bm{G}_1$ on $\bm{y}$, obtaining
        $$\widetilde{y}_i=(-1)^{\widehat{w}_i}y_i~~\textrm{for}~i=0,1,\cdots,n-1.$$

    5.~Input $\widetilde{\bm{y}}$ into the LDPC decoder to obtain the estimated payload data $\widehat{\bm{u}}$.
\end{algorithm2e}

\begin{example}\label{exp:ranWER}
    The rate-1/2 (3,6)-regular LDPC code $\mathscr{C}[8064, 4032]$ in \textbf{Example~\ref{exp:histo}} is taken as the payload code.
    As shown in Fig.~\ref{fig:werran}, the WER performances of hard-decision decoding of the extra bits are well predicted by~(\ref{eqn:lam1app}).
    For the proposed scheme with $k_1 =5$, we observe that, at SNR 0.5~dB, the scheme achieves a WER lower than $10^{-6}$, which can be traded off with the coding rate for the extra transmission.
    %For the proposed scheme with $k_1 =5$, we observe that the scheme can achieve a WER lower than $10^{-6}$ when SNR is 0.5~dB, where tradeoff exists between the WER and the coding rate for the extra transmission.
    We see that the WER performance is further improved while using SDD, indicating the potentials to carry more extra bits.
    %which shows the potential ability of packing more extra bits.
    Then the sum-product algorithm with maximum 50 iterations is employed for decoding the LDPC coded payload data.
    We present in Fig.~\ref{fig:berran} the BER of the code $\mathscr{C}[8064, 4032]$.
    It can be seen that the transmission of extra bits has noticeable effects on the payload data in the region of $\textrm{SNR} \leqslant -0.5$~dB but negligible effects in the region of $\textrm{SNR} \geqslant 0$~dB, which is of interest for the coded payload data.
    This also suggests that the extra transmission is more reliable than the payload transmission, a figure of merit for some applications.
    %It can be seen that, to achieve a BER lower than $10^{-3}$, the required SNR is greater than $1.0$~dB, where the extra bits have negligible effect on the reliability of the payload data.
\end{example}
\begin{figure}
    \centering
    \includegraphics[width=\figwidth]{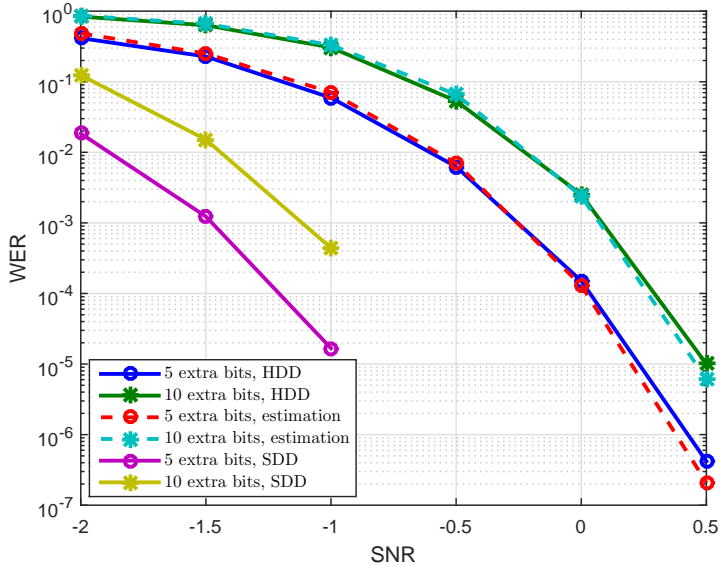}
    \caption{WER performance of the randomly coded extra bits superimposed on the (3,6)-regular LDPC code $\mathscr{C}[8064, 4032]$.}
    \label{fig:werran}
    %\vpscaefigure
\end{figure}
%\begin{figure}
%    \centering
%    \includegraphics[width=\figwidth]{k2wer.eps}
%    \caption{WER performance for different extra bits at SNR $=0.5~dB$}
%    \label{fig:k2wer}
%    %\vpscaefigure
%\end{figure}
\begin{figure}
    \centering
    \includegraphics[width=\figwidth]{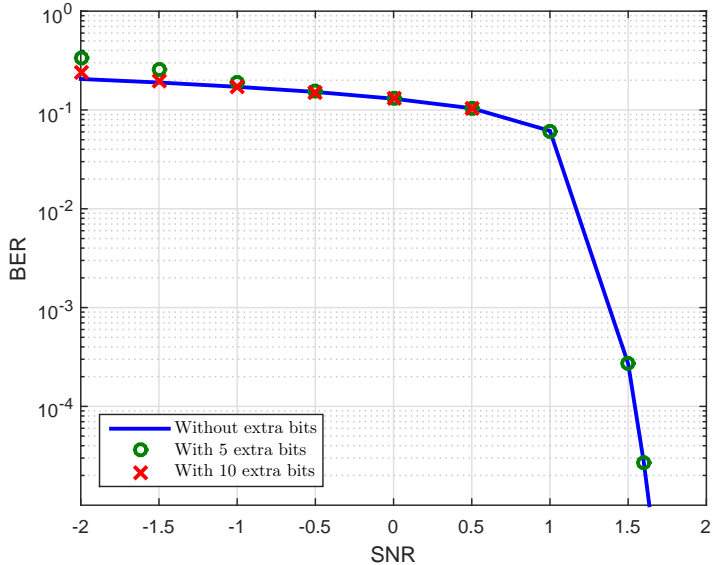}
    \caption{BER performance of (3,6)-regular LDPC $\mathscr{C}[8064, 4032]$ coded payload data, where extra bits are encoded by random superposition and decoded with hard decision.}
    \label{fig:berran}
    %\vpscaefigure
\end{figure}
\section{Structured Free-ride Codes}
The main issue of the random free-ride codes is the decoding complexity, which grows exponentially with the number of extra bits.
In this section, we develop a general construction for free-ride codes and then presented as examples the repetition codes and the RM codes.
%One of the major disadvantages of the scheme proposed in~\cite{Cai2019} is that the decoding complexity grows exponentially with the number $k_1$ of extra bits due to the exhaustive search based approach.
%To overcome this drawback, we use structured codes as the free-ride code so that they can be recovered more efficiently.

\subsection{Syndrome Channel}
As illustrated in Fig.~\ref{fig:model} and characterized by~(\ref{eqn:blockchannel}), the extra transmission link $\bm{w}\rightarrow\bm{y}$ can be viewed as a block-wise memoryless channel, over which the free-ride codeword is corrupted by two types of ``noises''.
One is the interference from the payload codeword, while the other is the channel noise.
On one hand, to recover the payload data by successive cancellation decoding, we need first recover the extra bits.
On the other hand, to recover the extra bits, we need null the interference from the payload link.
This can be achieved by transforming the hard-decision vector into a syndrome with the parity-check matrix $\bm{H}$ of the payload code.

More formally, upon receiving $\bm{y}$, we make a hard decision, obtaining $\bm{\widehat{y}}$.
The channel can be modeled as
\begin{equation}\label{eqn:hard}
    \bm{\widehat{y}} = \bm{c} + \bm{w} + \bm{\widehat{z}},
\end{equation}
where $\bm{c}$, $\bm{w}$ and $\bm{\widehat{z}}$ are the transmitted payload codeword, the free-ride codeword and the error pattern, respectively, all of which are binary vectors.
For HDD, the channel is a BSC with cross probability
\begin{equation}
    \mathrm{Pr}\{\widehat{Z}_j=1\}=p_b=Q(\frac{1}{\sigma})~~\textrm{for}~j=0,1,\dots,n-1.
\end{equation}
For SDD, the channel is a time-varying BSC with cross probabilities
\begin{equation}\label{eqn:soft}
    \mathrm{Pr}\{\widehat{Z}_j=1\}=\frac{\min\{P_{Y|X}(y_j|0),P_{Y|X}(y_j|1)\}}{P_{Y|X}(y_j|0)+P_{Y|X}(y_j|1)}~~\textrm{for}~j=0,1,\dots,n-1.
\end{equation}
%and the LLR is given by
%\begin{equation}
%    \Lambda_j({\bm{x}})=\log\frac{P_{Y|X}(y_j|0)}{P_{Y|X}(y_j|1)}~~\textrm{for}~j=0,1,\dots,n-1.
%\end{equation}
By multiplying the parity-check matrix on both sides of~(\ref{eqn:hard}), we have the {\em syndrome channel},
\begin{equation}
\bm{\widehat{y}}\bm{H}^{\mathrm{T}} = \bm{w}\bm{H}^{\mathrm{T}} + \bm{\widehat{z}}\bm{H}^{\mathrm{T}},
\end{equation}
where $\bm{w}\bm{H}^{\mathrm{T}}$ with $\bm{w} \in \mathscr{C}_1$ is the input and $\bm{\widehat{z}}\bm{H}^{\mathrm{T}}$ is the error pattern, both of which are binary vectors in $\mathbb{F}_2^m$.
This transformation is similar to the zero-forcing equalization commonly used in multiple-input multiple-output~(MIMO) detection, which cancels the interference from the payload link but usually introduces memory among the components of $\bm{\widehat{z}}\bm{H}^{\mathrm{T}}$.
The syndrome channel can be {\em approximated} as a memoryless BSC, either time-invariant or time-varying, as specified in the following.
Denote by $(\widehat{\bm{z}}\bm{H}^{\mathrm{T}})_i$ the $i$-th component of $\widehat{\bm{z}}\bm{H}^{\mathrm{T}}$.
For HDD, the cross probability is calculated as $\mathrm{Pr}\{(\widehat{\bm{Z}}\bm{H}^{\mathrm{T}})_i=1\}=p$ as expressed in~(\ref{eqn:calpc}),
while for SDD, the cross probabilities can be computed from~(\ref{eqn:soft}) by %a recursive algorithm through the trellis similar to~(\ref{eqn:llr}).
\begin{equation}
    \mathrm{Pr}\{(\widehat{\bm{Z}}\bm{H}^{\mathrm{T}})_i=1\}=\frac{1}{2}\left[1-\prod_{j:h_{ij}=1}(\mathrm{Pr}\{\widehat{Z}_j=0\}-\mathrm{Pr}\{\widehat{Z}_j=1\})\right].
\end{equation}
%with soft messages that can be approximated based on the LLRs corresponding to channel noise $\bm{\widehat{z}}$.
%Hence, the performance of the extra transmission is determined by the code $\mathscr{C}_1\bm{H}^{\mathrm{T}}$, and the free-ride code can be constructed in the syndrome space.

Since the payload codeword has been nulled in the syndrome channel, the extra bits can be recovered from $\bm{\widehat{y}}\bm{H}^{\mathrm{T}}$.
Hence, the performance is determined by the code $\mathscr{C}_1\bm{H}^{\mathrm{T}}\in \mathbb{F}_2^m$.
Actually, the basic idea is to transmit the extra bits by employing the coset codes.
To be precise, the payload code $\mathscr{C}_0$ define a coset partition $\mathbb{F}_2^{n}/\mathscr{C}_0$.
Then the construction of free-ride code $\mathscr{C}_1$ can be viewed as selecting some representative elements from the cosets.
For the free-ride code to be good, its codewords should be selected from distinct cosets.
%, where each possible extra bit sequence $\bm{v}$ corresponds to a coset $\bm{w}+\mathscr{C}_0$ of the payload code.
The whole code $\mathscr{C}$ can be viewed as the direct sum of $\mathscr{C}_0$ and $\mathscr{C}_1$,
%Hence, for the whole system, the coding scheme can be represented by
\begin{equation}
    \mathscr{C}=\bigcup_{\bm{w}\in\mathscr{C}_1}\left(\bm{w}+\mathscr{C}_0\right)=\mathscr{C}_1+\mathscr{C}_0
\end{equation}
By considering the linear mapping defined by the parity-matrix $\bm{H}$ from $\mathbb{F}_2^n$ to $\mathbb{F}_2^{m}$, we have the following diagram
\begin{equation}
\begin{array}{ccc}
\mathbb{F}_2^n&\overset{\bm{H}^{\mathrm{T}}}\longrightarrow&\mathbb{F}_2^{m}\\
\mathrel{\rotatebox{90}{$\subseteq$}}& &\mathrel{\rotatebox{90}{$\subseteq$}}\\
\mathscr{C}_0+\mathscr{C}_1 &\overset{\bm{H}^{\mathrm{T}}}\longrightarrow& \mathscr{C}_1\bm{H}^{\mathrm{T}}
\end{array}
\end{equation}
where $\mathscr{C}_1\bm{H}^{\mathrm{T}}=\{\bm{s}\bm{H}^{\mathrm{T}}:\bm{s}\in \mathscr{C}_1\}$ is referred to as the \emph{syndrome code} and denoted by $\mathscr{C}_s$.
%Since the payload code $\mathscr{C}_0$ is given by the null space of $\bm{H}$, the superposition codewords are maps to the subspace $\mathscr{C}_1\bm{H}^{\mathrm{T}}\in \mathbb{F}_2^{m}$.
%Therefore, we can design the superposition coding in $\mathbb{F}_2^{m}$, the image space of $\bm{H}$.

%\begin{figure}
%    \centering
%    \includegraphics[width=\figwidth]{model2.eps}
%    \caption{Diagram of simplified channel model for the extra transmission.}
%    \label{fig:model2}
%    %\vpscaefigure
%\end{figure}

Without loss of generality, we assume that the first $m$ columns of $\bm{H}$ are linearly independent.
That is, the parity-check matrix can be written as $\bm{H} = [\bm{H}_1, \bm{H}_2]$ with $\bm{H}_1$ being invertible.
Then we have the following general procedure to construct a free-ride code.
\begin{enumerate}
\item Construct a syndrome code $\mathscr{C}_s[m, k_1]$.
    Denote its generator matrix by $\bm{G}_s$.
\item Construct a free-ride code $\mathscr{C}_1[n, k_1]$ by specifying a generator matrix $\bm{G}_1$ such that
    \begin{equation}
        \bm{G}_1\bm{H}^{\mathrm T}=\bm{G}_s.
    \end{equation}
    For example, we can choose
    \begin{equation}
        \bm{G}_1 = [\bm{G}_s(\bm{H}_1^{-1})^{\mathrm T},\bm{0}_{k_1\times k}].
    \end{equation}
\end{enumerate}

%Actually, as mentioned before, the channel model the extra transmission can be viewed as a block-wise stationary memoryless channel with transition probability shown in~(\ref{eqn:blockchannel}).
%To summarize, by viewing the coded payloads as noise, the system model of extra transmission can be simplified as a syndrome channel, which is shown in Fig.~\ref{fig:model2}.
%%For decoding, the system model for the transmission of extra bits becomes a blocked binary channel, where the system diagram is shown in Fig.~\ref{fig:model2}.
%The extra bit sequence is encoded by a block linear code $\mathscr{C}_s[m,k_1]$, referred to as \emph{syndrome code}, resulting in a codeword
%\begin{equation}
%    \widetilde{\bm{w}} =\bm{w}\bm{H}^{\mathrm{T}}= \bm{v}\bm{G}_1\bm{H}^{\mathrm{T}}.
%\end{equation}
%Therefore, the generator matrix of the free-ride code $\bm{G}_1$ is designed to satisfy
%\begin{equation}
%    \bm{G}_1\bm{H}^{\mathrm T}=\bm{G}_s,
%\end{equation}
%where $\bm{G}_s$ is the generator matrix of the syndrome code $\mathscr{C}_s$.
%Without loss of generality, we assume that there exists some invertible matrix $\bm{Q}$ such that $\bm{Q}\bm{H} = [\bm{I}_{m},\bm{P}]$.
%Then $\bm{G}_1$ can be chosen as
%\begin{equation}
%    \bm{G}_1=[\bm{G}_s\bm{Q}^{\mathrm T},\bm{0}_{k_1\times k}].
%\end{equation}

The encoding of the free-ride code can be implemented directly over $\mathbb{F}_2^n$ by calculating $\bm{w}=\bm{v}\bm{G}_1$ or indirectly over $\mathbb{F}_2^m$ by first calculating $\bm{w}_s=\bm{v}\bm{G}_s$ and then finding a pre-image of $\bm{w}_s$ in $\mathbb{F}_2^n$ with respect to the linear mapping defined by $\bm{H}$.
For example, we can choose
\begin{equation}
    \bm{w}=(\bm{w}_s(\bm{H}^{-1})^{\mathrm{T}},\bm{0}^k).
\end{equation}
The decoding of the free-ride code can be reduced to the decoding of the syndrome code over the syndrome channel.
In particular, for HDD, the ML criterion is equivalent to the minimum distance criterion %with respect to the Hamming metric
\begin{equation}
    \bm{v}_{\textrm{ML}}=\underset{\bm{v}\in \mathbb{F}_2^{k_1}}{\arg\min}\{W_H(\bm{\widehat{y}}\bm{H}^{\mathrm{T}}+\bm{v}\bm{G}_s)\}.
\end{equation}
%Moreover, by assuming the memorylessness of the syndrome channel, i.e., the components of $\bm{\widehat{y}}\bm{H}^{\mathrm{T}}$ are i.i.d., the syndrome channel can be approximated by a BSC with crossover probability $p$ as expressed in~(\ref{eqn:calpc}).
For SDD, the LLR corresponding to the syndrome channel is given by
\begin{equation}
    \Lambda_i(\bm{w}_s)=2\tanh^{-1}\left(\prod_{j:h_{ij}=1}\tanh\left(\frac{1}{2}\Lambda_j(\bm{x})\right)\right)~~\textrm{for}~i=0,1,\dots,m-1,
\end{equation}
where $\bm{\Lambda}(\bm{x})$ is obtained from the channel observations, defined by~(\ref{eqn:llrx}).
Then the extra bits can be estimated according to the ML criterion
\begin{equation}
    \bm{v}_{\textrm{ML}}=\underset{\bm{v}\in \mathbb{F}_2^{k_1}}{\arg\max}\langle(-1)^{\bm{v}\bm{G}_s},\bm{\Lambda}(\bm{w}_s)\rangle,
\end{equation}
where $\langle\bm{x},\bm{y}\rangle=\sum_i x_iy_i$ is the inner product.
\subsection{Repetition Codes as Syndrome Codes}
\subsubsection{Transmission of One Extra Bit}
To begin with, we consider the case that only one extra bit is transmitted, i.e., $v\in\mathbb{F}_2$.
Simply, we implement the repetition code as the syndrome code, i.e., $\bm{G}_s=(\bm{1}^m)$, the all-one vector with length $m$.
Then the extra bit can be decoded by a majority-logic~(MLG) decoder.
It is expected that the repetition coded extra bit can be better detected than the randomly coded one, since the repetition code has a larger minimum Hamming distance.

\begin{example}
    We consider the transmission of one extra bit, in which a rate-1/2 (3,6)-regular LDPC code with length 128 is used as the payload code.
    The WER performance of the extra bit and the payload data are shown in Fig.~\ref{fig:werrep1} and Fig.~\ref{fig:werbep1}, respectively.
    As predicted, the repetition coded extra bit achieves a lower error rate than the randomly coded one.
    We also see that, with SDD, the repetition coded extra bit has a negligible effect on the reliability of the payload data.
\end{example}
\begin{figure}
    \centering
    \includegraphics[width=\figwidth]{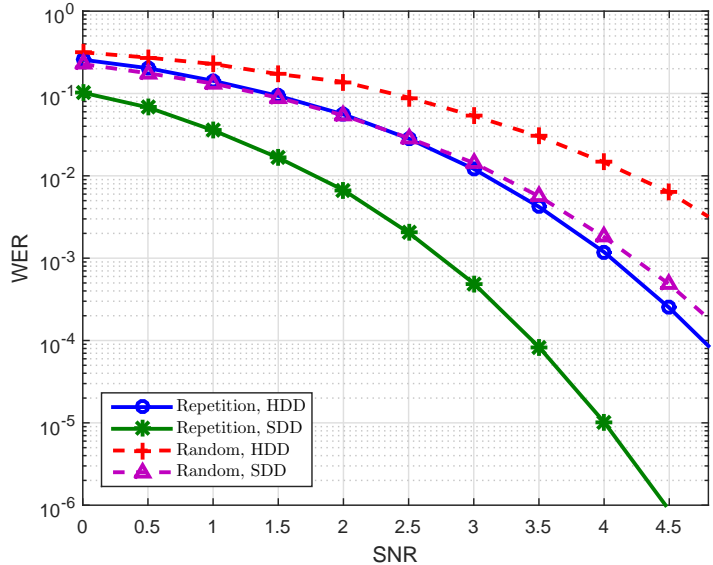}
    \caption{WER performance of repetition coded extra bit superimposed on the (3,6)-regular LDPC code $\mathscr{C}[128, 64]$.}
    \label{fig:werrep1}
    %\vpscaefigure
\end{figure}

\begin{figure}
    \centering
    \includegraphics[width=\figwidth]{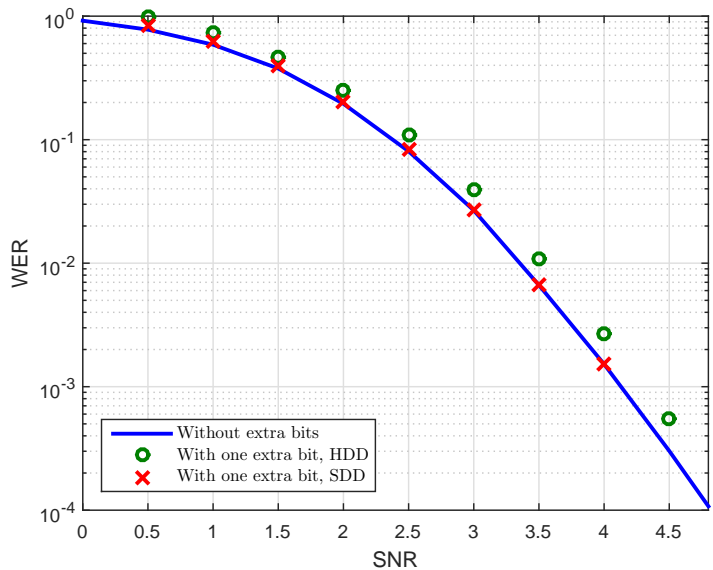}
    \caption{WER performance of the (3,6)-regular LDPC code $\mathscr{C}[128, 64]$ with one repetition coded extra bit.}
    \label{fig:werbep1}
    %\vpscaefigure
\end{figure}
\subsubsection{Transmission of Multiple Extra Bits}
%For the transmission of multiple extra bits, we use the Cartesian product of the repetition code as the syndrome code.
%In this, we present a coding scheme that $k_1$ extra bits are transmitted based on the repetition codes.
Assume that $k_1$ extra bits are required to be transmitted along with an LDPC codeword and $m_1=m/k_1$ is an integer.
We use the $k_1$-fold Cartesian product of the repetition code as the syndrome code.
The generator matrix of $\mathscr{C}_s$ is a $k_1\times m$ matrix, given by
\begin{equation}
    \bm{G}_s=\mathrm{diag}\{\bm{1}^{m_1},\dots,\bm{1}^{m_1}\}.
\end{equation}
%Therefore, the generator matrix of the free-ride code $\bm{G}_1$ is designed to satisfied
%\begin{equation}
%    \bm{G}_1\bm{H}^{\mathrm T}=\bm{G}_1.
%\end{equation}
%More precisely, $\bm{G}_1$ can be chosen as
%\begin{equation}
%    \bm{G}_1=[\bm{G}_1,\bm{0}_{k_1\times k}]\bm{Q}^{\mathrm T},
%\end{equation}
%if there exists a non-singular matrix $\bm{Q}$ such that $\bm{Q}\bm{H}=[\bm{I}_{n-k}, \bm{P}]$.
By employing the MLG decoding of the syndrome code, the number of extra operations is $\mathcal{O}(n)$, which is irrelevant to the number of extra bits $k_1$.
%In this case, the decoding of the $k_2$ extra bits is described in Algorithm~\ref{alg:decrep2}.
%\begin{algorithm2e}\label{alg:decrep2}
%    \caption{MLG decoding for the repetition coded extra bit}
%    Obtain the hard-decision sequence $\bm{\widehat{y}}$~(or the LLR sequence $\bm{L}$ for soft decoding) based on $\bm{y}$.\\
%    Decode separately the $k_2$ repetition codes  from $\widetilde{\bm{\widehat{y}}}$~(or $\bm{L}_{\mathrm{pc}}$ for soft decoding).
%    Compute $\widetilde{\bm{\widehat{y}}}=(\widetilde{\bm{\widehat{y}}}_0,\dots,\widetilde{\bm{\widehat{y}}}_{k_2-1})=\bm{\widehat{y}}\bm{H}^{\mathrm T}$, where $\widetilde{\bm{\widehat{y}}}_i$ are with length $n_2$ for $i=0,\dots,k_2-1$.\\
%    Output $\widehat{\bm{u}}_2=(\widehat{u}_{2,1},\dots,\widehat{u}_{2,k_2-1})$, where
%        $$\widehat{u}_{2,i}=\left\{\begin{array}{cl}
%            0, & W_H(\widetilde{\bm{\widehat{y}}}_i) \leqslant n_2/2\\
%            1, & W_H(\widetilde{\bm{\widehat{y}}}_i) > n_2/2
%        \end{array}\right..$$
%\end{algorithm2e}

%By the assumption of the independency of the unsatisfied parity-check equations, for regular LDPC codes, the error rate of the extra bit can be estimated by
%\begin{equation}
%    \lambda_2\approx 1- \left(\sum_{i=0}^{\lceil n_2/2\rceil-1}\binom{n_2}{i}p_c^i(1-p_c)^{n_2-i}\right)^{k_2}.
%\end{equation}

\begin{example}\label{exp:repWER2}
    The rate-1/2 (3,6)-regular LDPC code $\mathscr{C}[8064, 4032]$ in \textbf{Example~\ref{exp:histo}} is used for the payload transmission.
    The WER performance of the repetition coded extra bits is shown in Fig.~\ref{fig:werrep2}, where the dashed curves correspond to performance estimates
    \begin{equation}
        {\lambda}_1\approx \sum_{i=\lceil m_1/2\rceil}^{m_1}\binom{m_1}{i}p^i(1-p)^{m_1-i},
    \end{equation}
    obtained by assuming the syndrome channel as an ideal BSC($p$) with  $p$ given in~(\ref{eqn:calpc}).
    We see that the simulation results with HDD match well with the estimates.
    We also see that, with SDD, 18 extra bits can be transmitted with a WER about $10^{-4}$ at $\textrm{SNR}=1.5$~dB by the use of repetition codes.
    %Under the requirement of WER below $10^{-4}$ at $\textrm{SNR}=1$~dB, the scheme can pack 14 extra bits per transmission by SDD.
\end{example}
\begin{figure}
    \centering
    \includegraphics[width=\figwidth]{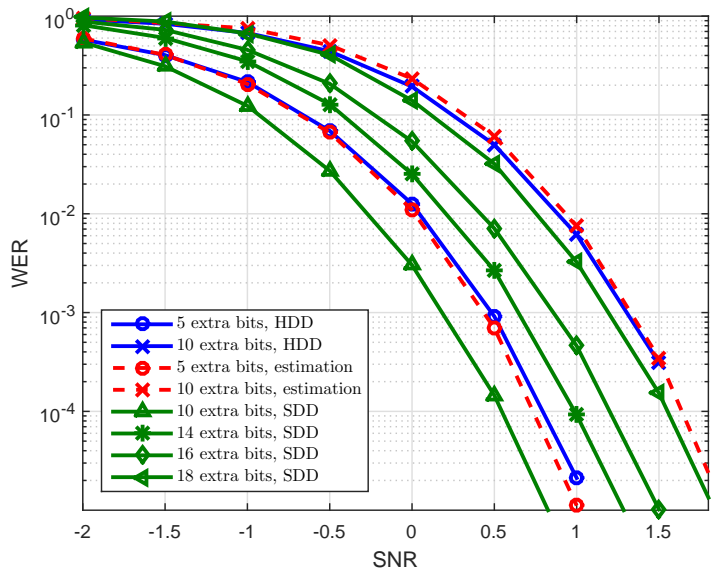}
    \caption{WER performance of the repetition coded extra bits superimposed on the (3,6)-regular LDPC code $\mathscr{C}[8064, 4032]$.}
    \label{fig:werrep2}
    %\vpscaefigure
\end{figure}

\subsection{Reed-Muller Codes as Syndrome Codes}
By comparing the results in \textbf{Example~\ref{exp:ranWER}} and \textbf{Example~\ref{exp:repWER2}}, we see that  the repetition coded extra bits have a worse error performance than the randomly coded ones.
The degradation of the performance is due to the small minimum distance of the Cartesian product of repetition codes.
To tackle this issue, we consider the first-order RM code as the syndrome code, which has both a large minimum distance and a low complexity ML decoding algorithm.
We use $\mathrm{RM}(1,\eta)$ to denote the first-order RM code with dimension $\eta+1$ and length $2^\eta$.
Since the minimum distance of $\mathrm{RM}(1,\eta)$ is half of its length, it is expected that the RM coded extra bits can have comparable error performance to the randomly coded ones.
In order to pack more extra bits into an LDPC codeword, Cartesian products of RM codes are implemented as syndrome codes.
Note that to fit the length of payload code, punctured $\mathrm{RM}(1,\eta)$ codes can be implemented in practice.
The ML decoding of the first order $\mathrm{RM}(1,\eta)$ can be implemented by the Fast Fourier Transform~(FFT) algorithm~\cite{MacWilliams1977}.
As a result, the number of extra operations can be reduced to $\mathcal{O}(n\log n)$, which is also irrelevant to the number of extra bits $k_1$.

\begin{example}
    The rate-1/2 (3,6)-regular LDPC code $\mathscr{C}[8064, 4032]$ in \textbf{Example~\ref{exp:histo}} is used for the payload transmission.
    FFT based soft-decision ML decoding algorithm is employed for decoding the extra bits.
    The WER performance of the RM coded extra bits is shown in Fig.~\ref{fig:werrm}.
    We see that, when $k_1=10$, the RM coded extra bits, with a much lower decoding complexity, can achieve a slightly better performance than the randomly coded ones.
    The BER of the payload data is shown in Fig.~\ref{fig:berrm}.
    We see that as many as 60 extra bits can be transmitted with a negligible effect on the reliability of the payload data.
\end{example}
\begin{figure}
    \centering
    \includegraphics[width=\figwidth]{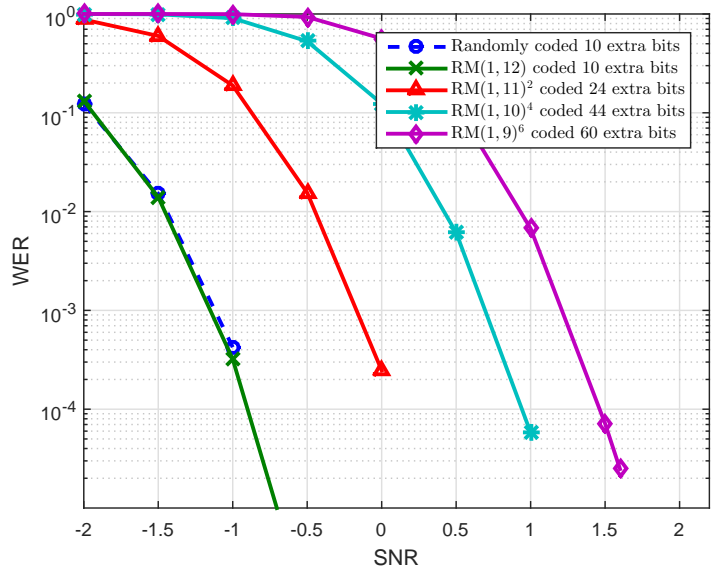}
    \caption{WER performance of RM coded extra bits superimposed on the (3,6)-regular LDPC code $\mathscr{C}[8064, 4032]$.}
    \label{fig:werrm}
    %\vpscaefigure
\end{figure}
\begin{figure}
    \centering
    \includegraphics[width=\figwidth]{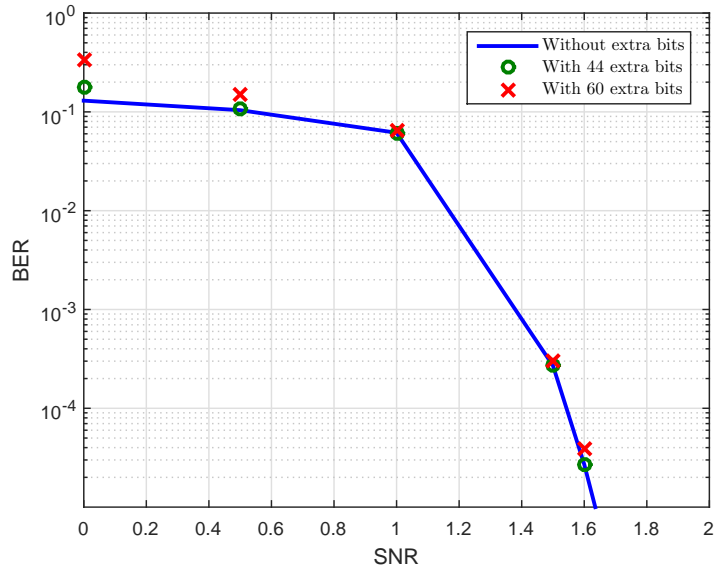}
    \caption{BER performance of the (3,6)-regular LDPC code $\mathscr{C}[8064, 4032]$ with RM coded extra bits.}
    \label{fig:berrm}
    %\vpscaefigure
\end{figure}

\section{Conclusions}
In this paper, we investigated the problem of transmitting extra bits by superposition over the coded payload link without any cost of extra transmission energy or extra bandwidth.
We derived the accessible capacity of extra transmission and provided a lower bound on the accessible capacity.
For LDPC coded payload links, we first presented the random free-ride coding scheme, which however requires exponential extra computational loads for decoding.
To lower down the decoding complexity, we then presented structured free-ride coding schemes.
Numerical results showed that, based on RM codes, as many as 60 extra bits can be transmitted reliably over a payload link coded with a length-8064 LDPC code.

\bibliographystyle{IEEEtran}
\bibliography{freeride_cai}

% Generated by IEEEtran.bst, version: 1.14 (2015/08/26)
\begin{thebibliography}{10}
\providecommand{\url}[1]{#1}
\csname url@samestyle\endcsname
\providecommand{\newblock}{\relax}
\providecommand{\bibinfo}[2]{#2}
\providecommand{\BIBentrySTDinterwordspacing}{\spaceskip=0pt\relax}
\providecommand{\BIBentryALTinterwordstretchfactor}{4}
\providecommand{\BIBentryALTinterwordspacing}{\spaceskip=\fontdimen2\font plus
\BIBentryALTinterwordstretchfactor\fontdimen3\font minus
  \fontdimen4\font\relax}
\providecommand{\BIBforeignlanguage}[2]{{%
\expandafter\ifx\csname l@#1\endcsname\relax
\typeout{** WARNING: IEEEtran.bst: No hyphenation pattern has been}%
\typeout{** loaded for the language `#1'. Using the pattern for}%
\typeout{** the default language instead.}%
\else
\language=\csname l@#1\endcsname
\fi
#2}}
\providecommand{\BIBdecl}{\relax}
\BIBdecl

\bibitem{Gallager1962}
R.~Gallager, ``Low-density parity-check codes,'' \emph{IRE Transactions on
  Information Theory}, vol.~8, no.~1, pp. 21--28, 1962.

\bibitem{DVBS2}
ETSI EN 301 307 Digital Video Broadcasting (DVB); V1.1.2 (2006-06),
Second generation framing structure, channel coding and modulation
systems for Broadcasting, Interactive Services, News Gathering and
other Broadband satellite applications, 2006, ETSI web site. Available:
http://www.etsi.org.

\bibitem{IEEE80211}
\emph{IEEE P802.11 Wireless LANs WWiSE Proposal: High Throughout Extension to the 802.11 Standard}, IEEE 11-04-0886-00-000n, Aug. 2004.

\bibitem{TS38212}
3GPP~TS~38.212, ``NR; Multiplexing and channel coding,'' 2017.

\bibitem{Wang2004}
X.~Wang, Y.~{Wu}, and B.~{Caron}, ``Transmitter identification using embedded
  pseudo random sequences,'' \emph{IEEE Transactions on Broadcasting}, vol.~50,
  no.~3, pp. 244--252, Sep. 2004.

\bibitem{Wang2005}
X.~Wang, Y.~Wu, and J.~. {Chouinard}, ``Robust data transmission using the
  transmitter identification sequences in ASTC DTV signals,'' \emph{IEEE
  Transactions on Consumer Electronics}, vol.~51, no.~1, pp. 41--47, Feb 2005.

\bibitem{Larsson2012}
E.~G. {Larsson} and R.~{Moosavi}, ``Piggybacking an additional lonely bit on
  linearly coded payload data,'' \emph{IEEE Wireless Communications Letters},
  vol.~1, no.~4, pp. 292--295, August 2012.

\bibitem{Yan2011}
Y.~{Yan}, L.~{Jun}, W.~{Duan}, and M.~H. {Lee}, ``A novel transmission scheme
  for {STBC} with one additional bit based on labeled {MPSK} constellation,''
  in \emph{2011 International Conference on Electrical and Control
  Engineering}, Sep. 2011, pp. 5687--5690.

\bibitem{Hong2015}
S.~{Hong}, E.~S. {Kang}, and D.~S. {Han}, ``Additional data transmission with
  rotated {QPSK} constellation,'' \emph{Electronics Letters}, vol.~51, no.~5,
  pp. 394--395, 2015.

\bibitem{Xia2014}
T.~{Xia} and H.~{Wu}, ``Novel blind identification of {LDPC} codes using
  average {LLR} of syndrome a posteriori probability,'' \emph{IEEE Transactions
  on Signal Processing}, vol.~62, no.~3, pp. 632--640, Feb 2014.

\bibitem{Senel2018}
K.~{Senel} and E.~G. {Larsson}, ``Grant-free massive {MTC}-enabled massive
  {MIMO}: A compressive sensing approach,'' \emph{IEEE Transactions on
  Communications}, vol.~66, no.~12, pp. 6164--6175, Dec 2018.

\bibitem{Cai2019}
S.~Cai, S.~Zhao, and X.~Ma, ``Packing additional bits into {LDPC} coded data,''
  submitted to \emph{Electronics Letters}.

\bibitem{Viterbi2013}
A.~J. Viterbi and J.~K. Omura, \emph{Principles of Digital Communication and
  Coding}.\hskip 1em plus 0.5em minus 0.4em\relax Courier Corporation, 2013.

\bibitem{Huang2014}
X.~{Huang}, X.~{Ma}, L.~{Lin}, and B.~{Bai}, ``Accessible capacity of secondary
  users,'' \emph{IEEE Transactions on Information Theory}, vol.~60, no.~8, pp.
  4722--4738, Aug 2014.

\bibitem{Ma2016}
X.~{Ma}, ``Coding theorem for systematic low density generator matrix codes,''
  in \emph{2016 9th International Symposium on Turbo Codes and Iterative
  Information Processing (ISTC)}, Sep. 2016, pp. 11--15.

\bibitem{Lin2018}
W.~{Lin}, S.~{Cai}, J.~{Sun}, X.~{Ma}, and B.~{Wei}, ``A low latency coding
  scheme: Semi-random block oriented convolutional code,'' in \emph{2018 IEEE
  10th International Symposium on Turbo Codes Iterative Information Processing
  (ISTC)}, Dec 2018, pp. 1--5.

\bibitem{Lin2019}
W.~{Lin}, B.~{Wei}, and X.~{Ma}, ``List decoding with statistical check for
  semi-random block-oriented convolutional code,'' \emph{Electronics Letters},
  vol.~55, no.~10, pp. 601--603, 2019.

\bibitem{MacWilliams1977}
F.~J. MacWilliams and N.~J.~A. Sloane, \emph{The Theory of Error-correcting
  Codes}.\hskip 1em plus 0.5em minus 0.4em\relax Elsevier, 1977, vol.~16.

\end{thebibliography}
%\begin{thebibliography}{1}
%
%\end{thebibliography}

% biography section
%
% If you have an EPS/PDF photo (graphicx package needed) extra braces are
% needed around the contents of the optional argument to biography to prevent
% the LaTeX parser from getting confused when it sees the complicated
% \includegraphics command within an optional argument. (You could create
% your own custom macro containing the \includegraphics command to make things
% simpler here.)
%\begin{IEEEbiography}[{\includegraphics[width=1in,height=1.25in,clip,keepaspectratio]{mshell}}]{Michael Shell}
% or if you just want to reserve a space for a photo:

%\begin{IEEEbiography}{Michael Shell}
%Biography text here.
%\end{IEEEbiography}
%
%% if you will not have a photo at all:
%\begin{IEEEbiographynophoto}{John Doe}
%Biography text here.
%\end{IEEEbiographynophoto}
%
%% insert where needed to balance the two columns on the last page with
%% biographies
%%\newpage
%
%\begin{IEEEbiographynophoto}{Jane Doe}
%Biography text here.
%\end{IEEEbiographynophoto}

% You can push biographies down or up by placing
% a \vfill before or after them. The appropriate
% use of \vfill depends on what kind of text is
% on the last page and whether or not the columns
% are being equalized.

%\vfill

% Can be used to pull up biographies so that the bottom of the last one
% is flush with the other column.
%\enlargethispage{-5in}

% that's all folks
\end{document}